\pdfoutput=1
\documentclass[sigconf,anonymous=false]{acmart}

\setcopyright{rightsretained}



\acmConference{preprint}{2019}{arXiv.org} 
\copyrightyear{2019}



\usepackage[T1]{fontenc}



\usepackage{todonotes}

\usepackage{booktabs} 



\hypersetup{pdfpagemode=UseNone}

\usepackage{paralist}
\usepackage{caption}
\usepackage{subcaption}

\graphicspath{{Figures/}}
\usepackage{ifpdf}
\ifpdf\DeclareGraphicsExtensions{.pdf,.png,.jpg}\fi

\newcommand{\matlab}{\textsc{Matlab}}

\newcommand{\R}{\mathbb{R}}

\newcommand{\defined}{\triangleq}

\newcommand{\discOp}{\mathsf{Disc}}
\newcommand{\discSet}[2]{\discOp\left({#1},{#2}\right)}
\newcommand{\invOp}{\mathsf{Inv}}
\newcommand{\invSet}[2]{\invOp\left({#1},{#2}\right)}
\newcommand{\viabOp}{\mathsf{Viab}}
\newcommand{\viabSet}[2]{\viabOp\left({#1},{#2}\right)}
\newcommand{\reachOp}{\mathsf{R}}
\newcommand{\reachSet}[2]{\reachOp\left({#1},{#2}\right)}

\newcommand{\centvec}[1]{\centvecelem{}{#1}}
\newcommand{\centvecelem}[2]{{c}_{#1}\left({#2}\right)}
\newcommand{\genvec}[2]{{g}_{#1}\left({#2}\right)}
\newcommand{\genmat}[1]{\mat{G}\left({#1}\right)}
\newcommand{\countvecOp}{p}
\newcommand{\countvec}[1]{\countvecOp\left({#1}\right)}
\newcommand{\countvecI}{\countvec{\set I}}
\newcommand{\bigphi}{\Phi}
\newcommand{\controlDisturbSet}{\set F}

\newcommand{\element}[2]{({#1})_{#2}}

\newcommand{\ones}[1]{\mathbf{1}_{#1}}
\newcommand{\zeros}[1]{\mathbf{0}_{#1}}
\newcommand{\mat}[1]{\mathrm{#1}}
\newcommand{\set}[1]{\mathcal{#1}}
\newcommand{\com}[1]{\texttt{#1}}

\newcommand{\sig}{(\cdot)}

\newcommand{\bma}{\begin{bmatrix}}
\newcommand{\ema}{\end{bmatrix}}

\long\def\ignore#1{}
\newcommand{\refeq}[1]{(\ref{#1})}

\newcommand{\todoIM}[1]{\todo[inline,color=green,author=IM]{#1}}

\usepackage{xcolor}
\theoremstyle{acmdefinition}
\newtheorem{remark}[theorem]{Remark}

\begin{document}
\title{Invariant, Viability and Discriminating Kernel
  Under-Approximation via Zonotope Scaling}

\author{Ian M. Mitchell}
\orcid{0000-0001-7053-441X}
\affiliation{%
  \department{Department of Computer Science}
  \institution{The University of British Columbia}
  \city{Vancouver}
  \state{British Columbia}
  \country{Canada}}
\email{mitchell@cs.ubc.ca}

\author{Jacob Budzis}
\affiliation{%
  \department{Engineering Physics Program}
  \institution{The University of British Columbia}
  \city{Vancouver}
  \state{British Columbia}
  \country{Canada}}
\email{jrbudzis@gmail.com}

\author{Andriy Bolyachevets}
\affiliation{%
  \department{Department of Computer Science}
  \institution{The University of British Columbia}
  \city{Vancouver}
  \state{British Columbia}
  \country{Canada}}
\email{andriy.bolyachevets@alumni.ubc.ca}


\begin{abstract}
  Scalable safety verification of continuous state dynamic systems has been
demonstrated through both reachability and viability analyses using parametric
set representations; however, these two analyses are not interchangable in
practice for such parametric representations.  In this paper we consider
viability analysis for discrete time affine dynamic systems with adversarial
inputs.  Given a set of state and input constraints, and treating the inputs
in best-case and/or worst-case fashion, we construct invariant, viable and
discriminating sets, which must therefore under-approximate the invariant,
viable and discriminating kernels respectively.  The sets are constructed by
scaling zonotopes represented in center-generator form.  The scale factors are
found through efficient convex optimizations.  The results are demonstrated on
two toy examples and a six dimensional nonlinear longitudinal model of a
quadrotor.

\end{abstract}

\ignore{
%
%
\begin{CCSXML}
<ccs2012>

<concept>
<concept_id>10010405</concept_id>
<concept_desc>Applied computing</concept_desc>
<concept_significance>300</concept_significance>
</concept>

 <concept> 
<concept_id>10010147.10010341.10010342.10010344</concept_id>
<concept_desc>Computing methodologies~Model verification and validation</concept_desc>
<concept_significance>500</concept_significance>
 </concept>
  
</ccs2012>  
\end{CCSXML}

\ccsdesc[500]{Computing methodologies~Model verification and validation}
\ccsdesc[300]{Applied computing}
}

\keywords{reachability, viability, controlled invariance, robust safety analysis, set-theoretic methods, convex optimization, zonotopes}

\maketitle

\section{Introduction} \label{s:intro}

Reachability analysis is a rigorous alternative to sampling-based
verification of dynamic systems, and at least for linear (or affine)
systems there have been recent demonstrations of techniques capable of
handling thousands of continuous state space
dimensions~\cite{bakduggirala17a,Bogomolov+18}.  Reachable sets and
tubes---or more typically over-approximations of them to ensure
soundness---are an effective tool for demonstrating safety: If the
forward / backward reach set or tube does not intersect the unsafe /
initial set respectively then the system is safe.  Any input or
parameter uncertainty is typically treated in a worst-case fashion to
make the reach set or tube larger and hence the system less likely to
be judged safe.

In this paper we adopt the alternative approach to proving safety
advocated in viability theory~\cite{ABSP11} but more commonly framed
as various versions of invariant sets: Find the set of states from
which trajectories of the system are guaranteed to satisfy a specified
safety constraint.  A critical feature of viable or controlled
invariant sets is that a control input may be chosen in a best-case
fashion to keep the trajectories safe.  Robust versions of these sets
(aka discriminating sets) also allow an adversarial disturbance input
which is treated in a worst-case fashion to drive the trajectories out
of the safety constraints.  In every case we must under-approximate
the results to ensure soundness of the safety analysis.

While viability analysis can be reduced to reachability analysis and
vice versa in theory, the parametric representations which can handle
high dimensional systems do not support such reductions; for example,
the reductions require set complements but the parametric
representations are usually restricted to convex sets whose
complements are non-convex.

The need to develop scalable algorithms for viability / invariance in
addition to those for reachability is therefore practical: The former
require under-approximation while the latter over-approximation, and
some analyses are more naturally amenable to parametric
representations in one formulation or the other, but rarely both.
Although we do not have space to explore it in this paper, an example
of an application which naturally fits into the viability framework is
testing at run-time whether an exogenous input signal---such as might
arise from a human-in-the-loop control---will maintain safety; for
example, see~\cite{MYLT16}.

The focus of this paper is therefore development of more scalable
formulations for (robust) invariance / viability based analysis of
affine continuous state dynamic systems.  Scalability is achieved by
framing the calculations as convex optimizations to find efficient
parametric representations in the form of zonotopes.  The specific
contributions of this paper are to:
\begin{compactitem}
  \item Show how a finite horizon invariance kernel of an affine
    system with disturbance input can be underapproximated with a
    zonotope via a linear program.
  \item Extend the formulation to allow control inputs and thereby
    underapproximate finite horizon viability and discriminating
    kernels with zonotopes via a convex program.
  \item Demonstrate the use of the discriminating kernel to compute a
    larger robust controlled invariant set for a moderate dimensional
    nonlinear quadrotor model than was achieved using an ellipsoidal
    representation in~\cite{MYLT16}.
\end{compactitem}

\section{Related Work} \label{s:related}

Reachability and viability have been applied to a broad variety of
different dynamic systems resulting in a huge range of different
algorithms.  We focus here on parametric approaches for linear or
affine dynamics.  Ellipsoid and support vector parametric
representations of viability constructs were explored
in~\cite{MKMOD13,MK15}.  Zonotopic representations for reachability
were introduced in~\cite{girard05} and have since been extensively
explored; for example~\cite{GGM06,AK11,AF16}.  Our work was inspired,
however, by the papers~\cite{SA17a,SA17b} and in
particular~\cite{SA17c}, which utilizes a convex optimization to
select zonotope generator weights to construct a control scheme that
will drive a set of initial states into the smallest possible set of
final states.  Also similar to this work is~\cite{HREA16} in which the
authors seek a linear feedback control input which will maximize the
size of the backward reachable set and arrive at a bilinear matrix
inequality based on containment of one zonotope within another.
Significantly, unlike most other work these papers treat the input in
a best-case fashion.  The difference with the work below lies in the
reachability vs viability formulation and the fact that our approach
constructs a set-valued viable feedback control from a convex
optimization.

\section{Preliminaries} \label{s:prelim}

For a matrix $\mat M$, let $|\mat M|$ denote the elementwise absolute
value and $\element{\mat M}{i,j}$ denote the element in row $i$ and
column $j$; therefore, $\element{\mat M^s}{i,j}$ denotes the element
in row $i$ and column $j$ of the matrix power $\mat M^s$.  Also define
$\ones{d}$ and $\zeros{d}$ to be the vectors in $\R^d$ of all ones and
all zeros respectively.

\subsection{System Dynamics}

Consider a discrete time dynamic system for times $t = 0, 1, 2,
\ldots$:
\begin{equation} \label{e:linear-dynamics}
  x(t+1) = \mat A x(t) + \mat B u(t) + \mat C v(t) + w
\end{equation}
where
\begin{compactitem}
  \item The \emph{state} is $x \in \R^{d_x}$.
  \item The \emph{control input} is $u \in \set U \subset \R^{d_u}$.  The
    control input constraint $\set U$ is an interval hull (or
    hyperrectangle) defined by the elementwise inequalities
    \begin{equation} \label{e:control-constraints}
      \set U = \{ u \in \R^{d_u} \mid \underline u \leq u \leq \overline u \}.
    \end{equation}
  \item The \emph{disturbance input} is $v \in \set V \subset
    \R^{d_v}$.  The disturbance input constraint $\set V$ is a
    zonotope (see section~\ref{s:zonotope}).
  \item The \emph{drift} $w \in \R^{d_x}$ is constant.  We note that
    $w \neq 0$ can alternatively be treated by offsetting the center
    of the zonotope $\set V$, but we will carry $w$ through separately
    so as to support a drift term for disturbance-free systems.
\end{compactitem}

Given an initial condition $x(0)$ and input signals $u\sig$ and
$v\sig$, the solution of~\refeq{e:linear-dynamics} is given by
\begin{equation} \label{e:linear-soln}
  x(t) = \mat A^t x(0)
    + \sum_{s=0}^{t-1} \mat A^{t-1-s} \left( \mat B u(s) + \mat C v(s) + w \right).
\end{equation}

\subsection{Sets of Interest for Safety Verification}

We will formulate our safety verification problem as keeping the system state within a constraint set $\set X$.  For reasons of notational simplicity, we will assume in the rest of the paper that $\set X$ is an interval hull or box constraint of the form
\begin{equation} \label{e:state-constraints}
    \set X = \{ x \in \R^{d_x} 
        \mid \underline x \leq x \leq \overline x \}.
\end{equation}
It is possible to relax this assumption; see
section~\ref{s:discussion} for further comments.

We seek to approximate various subsets of the constraint set.  The most general is a finite horizon \emph{discriminating} or \emph{robust controlled invariant} set
\begin{equation} \label{e:discriminating-defn}
  \discSet{[0,T]}{\set X} \defined
      \left\{ x(0) \in \set X  \,\left|\,
      \begin{gathered}
        \exists u\sig, \forall v\sig, \forall t \in [ 0, T ],\\
         x(t) \in \set X 
      \end{gathered}
      \right. \right\},
\end{equation}
where $u(t) \in \set U$ and $v(t) \in \set V$ for all $t = 0, \ldots,
T$.  Note that the control input $u(t)$ tries to keep the system state
within the constraint set $\set X$, while the disturbance input $v(t)$
is treated in a worst-case or adversarial fashion and tries to drive
the system state outside the constraint set.

We will also consider two special cases of the discriminating set.  For systems which lack a control input, we seek a finite horizon \emph{invariant} set
\begin{equation} \label{e:invariant-defn}
  \invSet{[0,T]}{\set X} \defined
      \left\{ x(0) \in \set X  \,\left|\,
      \begin{gathered}
        \forall v\sig, \forall t \in [ 0, T ],\\
         x(t) \in \set X 
      \end{gathered}
      \right. \right\},
\end{equation}
while for systems which lack a disturbance input we seek a finite horizon \emph{viable} or \emph{controlled invariant} set
\begin{equation} \label{e:viable-defn}
  \viabSet{[0,T]}{\set X} \defined
      \left\{ x(0) \in \set X  \,\left|\,
      \begin{gathered}
        \exists u\sig, \forall t \in [ 0, T ],\\
         x(t) \in \set X 
      \end{gathered}
      \right. \right\}.
\end{equation}

Our algorithm for computing these previous sets will often make use of the forward \emph{reach} set of some specified set $\set S$.
\begin{equation} \label{e:reach-defn}
  \reachSet{t}{\set S} \defined
      \left\{ x(t) \in \R^{d_x} \,\left|\,
        \exists w\sig, x(0) \in \set S 
      \right. \right\},
\end{equation}
where the choice of input signal $w\sig = u\sig$ or $w\sig = v\sig$
should be clear from context.
Unlike the discriminating, invariant and viable sets, the reach set is defined at a single time rather than over an interval, and the set $\set S$ is an initial condition rather than a constraint.

\subsection{Set Representation: Zonotopes} \label{s:zonotope}

We will use \emph{zonotopes} as our parametric representation of invariant, viable and/or discriminating sets. A zonotope $\set S \subseteq \R^d$ is a polytope defined by a center $\centvec{\set S} \in \R^d$ and a finite number of generators $\genvec{i}{\set S} \in \R^d$ for $i = 1, \ldots, \countvec{\set S}$:
\begin{equation} \label{e:zonotope-defn-components}
  \set S = \left\{ \left. \centvec{\set S} 
                   + \sum_{i=1}^{\countvec{\set S}} 
                   \lambda_i \genvec{i}{\set S}
                   \;\right|\; {-1} \leq \lambda_i \leq +1 \right\}.
\end{equation}
In most cases it is more convenient to work with the center-generator
tuple (or \emph{G-representation} for a zonotope $\set S$ rather
than~\refeq{e:zonotope-defn-components}
\[
  \begin{aligned}
    \set S &= \langle \centvec{\set S} \mid \genvec{1}{\set S},
        \genvec{2}{\set S}, \ldots, 
        \genvec{\countvec{\set S}}{\set S} \rangle, \\
      &= \langle \centvec{\set S} \mid \genmat{\set S} \rangle,
  \end{aligned}
\]
where the generator matrix $\genmat{\set S}$ is formed by horizontal concatenation of the generator vectors
\[
    \genmat{\set S} = \bma \genvec{1}{\set S} & \genvec{2}{\set S} & \cdots 
         & \genvec{\countvec{\set S}}{\set S} \ema 
       \in \R^{d \times \countvec{\set S}}.
\]
When it is necessary to refer to individual elements of a generator
matrix or vector, we will use the notation $\genvec{j,i}{\set S}$ to
specify the element in the $j^{th}$ row and $i^{th}$ column of matrix
$\genmat{\set S}$, or equivalently the $j^{th}$ element of vector
$\genvec{i}{\set S}$.

With the generator matrix notation, we can
write~\refeq{e:zonotope-defn-components} more compactly as
\begin{equation} \label{e:zonotope-defn}
  \set S = \left\{ \centvec{\set S} + \genmat{\set S} \lambda
                     \;\left|\; {-\ones{\countvec{\set S}}} \leq \lambda
                            \leq +\ones{\countvec{\set S}} \right. \right\}.
\end{equation}
For a vector $x \in \set S$, define $\lambda(x, \set S)$ such that
\begin{equation} \label{e:lambda-defn}
    x = \centvec{\set S} + \genmat{\set S} \lambda(x, \set S)
    \text{ and } 
    {-\ones{\countvec{\set S}}} \leq \lambda(x, \set S)
      \leq +\ones{\countvec{\set S}} 
\end{equation}
and note that by~\refeq{e:zonotope-defn} $\lambda(x, \set S)$ exists
but is not necessarily unique.

The coordinate bounds for a zonotope (or equivalently the interval
hull containing that zonotope) are easily computed; for example,
see~\cite{GGM06,AK11}.  In particular, for $x \in \set S$,
\begin{equation} \label{e:zonotope-box-constraints-elementwise}
  \centvecelem{j}{\set S}
    - \sum_{i=0}^{\countvec{\set S}} \left|\genvec{j,i}{\set S}\right|
  \leq x_j \leq
  \centvecelem{j}{\set S}
    + \sum_{i=0}^{\countvec{\set S}} \left|\genvec{j,i}{\set S}\right|,
\end{equation}
for all $j = 1, \ldots, d$, which we can write in compact form as
\begin{equation} \label{e:zonotope-box-constraints}
  \centvec{\set S} - |\genmat{\set S}| \ones{\countvec{\set S}}
  \leq x \leq
  \centvec{\set S} + |\genmat{\set S}| \ones{\countvec{\set S}}.
\end{equation}

\begin{lemma} \label{t:zonotope-box-containment}
  Consider an interval hull
  \[
    \set B = \{ b \in \R^d 
        \mid \underline b \leq b \leq \overline b \}.
  \]
  For zonotope $\set S \subset \R^d$, the containment $\set S
  \subseteq \set B$ is equivalent to the constraints
  \begin{equation} \label{e:zonotope-box-containment-elementwise}
    \begin{aligned}
      \centvecelem{j}{\set S}
          - \sum_{i=0}^{\countvec{\set S}} \left|\genvec{j,i}{\set S}\right|
        &\geq \underline b_j, \\
      \centvecelem{j}{\set S}
          + \sum_{i=0}^{\countvec{\set S}} \left|\genvec{j,i}{\set S}\right|,
        &\leq \overline b_j
    \end{aligned}
  \end{equation}
  for all $j = 1, \ldots, d$.  More compactly,
  \begin{equation} \label{e:zonotope-box-containment}
    \begin{aligned}
      \centvec{\set S} - |\genmat{\set S}| \ones{\countvec{\set S}}
        &\geq \underline b, \\
      \centvec{\set S} + |\genmat{\set S}| \ones{\countvec{\set S}}
        &\geq \overline b.
    \end{aligned}
  \end{equation}
\end{lemma}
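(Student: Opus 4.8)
The plan is to prove the two directions of the equivalence separately, working with the elementwise form~\refeq{e:zonotope-box-containment-elementwise} since the compact form~\refeq{e:zonotope-box-containment} is merely its vector restatement obtained by collecting the $d$ coordinate inequalities. Throughout I would lean on the coordinate bounds~\refeq{e:zonotope-box-constraints-elementwise}, which already assert that \emph{every} $x \in \set S$ satisfies $\centvecelem{j}{\set S} - \sum_i |\genvec{j,i}{\set S}| \leq x_j \leq \centvecelem{j}{\set S} + \sum_i |\genvec{j,i}{\set S}|$ for each coordinate $j = 1, \ldots, d$.

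For the easy direction, suppose the constraints~\refeq{e:zonotope-box-containment-elementwise} hold and pick an arbitrary $x \in \set S$. Chaining the assumed lower bound, the coordinate bound~\refeq{e:zonotope-box-constraints-elementwise}, and the assumed upper bound yields $\underline b_j \leq \centvecelem{j}{\set S} - \sum_i |\genvec{j,i}{\set S}| \leq x_j \leq \centvecelem{j}{\set S} + \sum_i |\genvec{j,i}{\set S}| \leq \overline b_j$ for every $j$, so $x \in \set B$. As $x$ was arbitrary, $\set S \subseteq \set B$.

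For the converse, the key point---and the one place where a little care is needed---is that the bounds in~\refeq{e:zonotope-box-constraints-elementwise} are not merely valid but are actually \emph{attained} by points of $\set S$, so that they coincide with the true coordinatewise extrema of the zonotope. Concretely, fix a coordinate $j$ and choose the generator weights $\lambda_i = \sgn(\genvec{j,i}{\set S})$ in the definition~\refeq{e:zonotope-defn-components}; this is feasible since each $|\lambda_i| \leq 1$, and the resulting point $x \in \set S$ has $x_j = \centvecelem{j}{\set S} + \sum_i \lambda_i \genvec{j,i}{\set S} = \centvecelem{j}{\set S} + \sum_i |\genvec{j,i}{\set S}|$. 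Because $\set S \subseteq \set B$, this point obeys $x_j \leq \overline b_j$, which is exactly the upper constraint; the lower constraint follows identically from the mirror choice $\lambda_i = -\sgn(\genvec{j,i}{\set S})$. The main obstacle is thus precisely this attainment argument, since~\refeq{e:zonotope-box-constraints-elementwise} on its own only \emph{bounds} the coordinates and I must exhibit the extreme vertices in each axis direction to certify tightness. Once that is established, both inclusions are immediate and the compact form~\refeq{e:zonotope-box-containment} follows by stacking the elementwise inequalities, with $|\genmat{\set S}| \ones{\countvec{\set S}}$ encoding the row sums $\sum_i |\genvec{j,i}{\set S}|$.
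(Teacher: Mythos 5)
Your proof is correct and follows essentially the same route as the paper, whose one-line proof simply invokes the coordinate bounds~\refeq{e:zonotope-box-constraints-elementwise} and~\refeq{e:zonotope-box-constraints}. The only substantive detail you add is the sign-choice argument ($\lambda_i = \pm\sgn(\genvec{j,i}{\set S})$) showing those bounds are \emph{attained}, which is exactly what the ``containment implies constraints'' direction needs; the paper leaves this tightness fact implicit, having already described the bounds as giving the interval hull of the zonotope.
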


\begin{proof}
  A straightforward consequence of the
  bounds~\refeq{e:zonotope-box-constraints-elementwise}
  and~\refeq{e:zonotope-box-constraints}.
\end{proof}

The class of zonotopes is closed under linear
transformation~\cite{girard05}, and the effect of a linear transform
on a zonotope is easily implemented using the center-generator tuple
representation
\[
    \mat M \set S = \langle \mat M \centvec{\set S} \mid 
      \mat M \genmat{\set S} \rangle, \\
\]
where $\set S$ is a zonotope in dimension $d$ and $\mat M \in \R^{m
  \times d}$ is the matrix representing the linear transformation.

\ignore{
  The class of zonotopes is closed under linear transformation and
Minkowski sum, and the effect of these operators on zonotopes is
easily implemented using the center-generator tuple representation
\begin{equation} \label{e:zonotope-transform}
  \begin{aligned}
    \mat M \set S_1 &= \langle \mat M \centvec{\set S_1} \mid 
      \mat M \genmat{\set S_1} \rangle, \\
    \set S_1 \oplus \set S_1 &= 
      \langle \centvec{\set S_1} + \centvec{\set S_2} \mid
      \bma \genmat{\set S_1} & \genmat{\set S_2} \ema \rangle,
  \end{aligned}
\end{equation}
where $\set S_1$ and $\set S_2$ are zonotopes in dimension $d$ and
$\mat M \in \R^{m \times d}$ is a linear transformation.  \todoIM{We
  only need to discuss Minkowski addition if we refer to it later; for
  example, below or in section~\ref{s:with-disturbance}.  Perhaps we
  can get away with simply citing~\cite{girard05,GGM06}?}

For systems without control or disturbance inputs it is straightforward to show that the exact reach set of an initial zonotope $\set S$ under dynamics~\refeq{e:linear-dynamics} is given by
\[
  \begin{aligned}
    \reachSet{t}{\set S} 
      &= \langle \centvec{\reachSet{t}{\set S}} \mid 
        \genmat{\reachSet{t}{\set S}} \rangle, \\
      &= \mat A^t \set S,
  \end{aligned}
\]
so by~\refeq{e:zonotope-transform}
\begin{equation} \label{e:zonotope-reach-no-input}
  \begin{aligned}
    \centvec{\reachSet{t}{\set S}} &= \mat A^t \centvec{\set S}, \\
    \countvec{\reachSet{t}{\set S}} &= \countvec{\set S}, \\
    \genmat{\reachSet{t}{\set S}} &= \mat A^t \genmat{\set S}.
 \end{aligned}
\end{equation}
}

\begin{proposition} \label{t:reach-set-evolution-no-control}
  For systems without control inputs but with disturbance inputs
  constrained by the zonotope
  \[
    \set V = \langle \centvec{\set V} | \genmat{\set V} \rangle,
  \]
  the center, generator count and generator matrix of the exact reach
  set $\reachSet{t}{\set S}$ are
  \begin{equation} \label{e:reach-set-zonotope-no-control}
    \begin{aligned}
      \centvec{\reachSet{t}{\set S}} &= \mat A^t \centvec{\set S} 
        + \sum_{s=0}^{t-1} \mat A^{t-1-s}
          \left( \mat C \centvec{\set V} + w \right), \\
      \countvec{\reachSet{t}{\set S}} &= \countvec{\set S}
        + t \countvec{\set V}, \\    
      \genmat{\reachSet{t}{\set S}} &= 
        \bma \mat A^t \genmat{\set S} 
             & \mat A^{t-1} \mat C \genmat{\set V}
             & \cdots
             & \mat C \genmat{\set V} \ema
   \end{aligned}
  \end{equation}
\end{proposition}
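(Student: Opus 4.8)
The plan is to substitute the zonotope parametrizations of the initial set and of the disturbance values into the closed-form solution~\refeq{e:linear-soln} and then collect the constant and variable parts. First I would specialize~\refeq{e:linear-soln} to the control-free case, dropping the control term $\mat B u(s)$, so that (recalling that the reach set is the union over all admissible disturbance signals, i.e.\ $w\sig = v\sig$) any reachable state has the form
\[
  x(t) = \mat A^t x(0) + \sum_{s=0}^{t-1} \mat A^{t-1-s}\left(\mat C v(s) + w\right),
\]
where $x(0)$ ranges over $\set S$ and each $v(s)$ ranges over $\set V$.

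Next I would introduce a coefficient vector for each free choice. Writing $x(0) = \centvec{\set S} + \genmat{\set S}\lambda_0$ with $-\ones{\countvec{\set S}} \leq \lambda_0 \leq \ones{\countvec{\set S}}$ as in~\refeq{e:lambda-defn}, and $v(s) = \centvec{\set V} + \genmat{\set V}\mu_s$ with $-\ones{\countvec{\set V}} \leq \mu_s \leq \ones{\countvec{\set V}}$ for each $s = 0, \ldots, t-1$, substitution and distribution of the matrix products splits $x(t)$ into a constant part
\[
  \mat A^t \centvec{\set S} + \sum_{s=0}^{t-1} \mat A^{t-1-s}\left(\mat C\centvec{\set V} + w\right),
\]
which is exactly the claimed center $\centvec{\reachSet{t}{\set S}}$, and a variable part $\mat A^t \genmat{\set S}\lambda_0 + \sum_{s=0}^{t-1}\mat A^{t-1-s}\mat C\genmat{\set V}\mu_s$.

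The key observation—and the step I would take the most care over—is that the coefficient vectors $\lambda_0, \mu_0, \ldots, \mu_{t-1}$ are \emph{mutually independent}, each ranging over its own box, precisely because a disturbance signal supplies a fresh value $v(s) \in \set V$ at every time step. Stacking them into the single vector $\lambda = \bma \lambda_0 \\ \mu_0 \\ \vdots \\ \mu_{t-1} \ema$, the admissible set is exactly the box $-\ones{\countvec{\set S} + t\countvec{\set V}} \leq \lambda \leq \ones{\countvec{\set S} + t\countvec{\set V}}$, which simultaneously yields the generator count $\countvec{\set S} + t\,\countvec{\set V}$ and, by matching the variable part against $\genmat{\reachSet{t}{\set S}}\lambda$, the block structure of the generator matrix: the $s$-th disturbance block contributes $\mat A^{t-1-s}\mat C\genmat{\set V}$, running from $\mat A^{t-1}\mat C\genmat{\set V}$ at $s=0$ down to $\mat C\genmat{\set V}$ at $s=t-1$. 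Since the map from the coefficient box to states is onto the reach set and every coefficient choice decodes to a valid $x(0)$ and signal $v\sig$, the two inclusions coincide and the displayed G-representation is exact. The only genuine subtlety is this independence claim: had the disturbance been constrained constant in time, the copies of $\genmat{\set V}$ would collapse and the count would instead be $\countvec{\set S} + \countvec{\set V}$.
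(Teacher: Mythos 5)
Your proof is correct and complete; the one caveat is that the paper itself never spells out an argument---its ``proof'' of Proposition~\ref{t:reach-set-evolution-no-control} is simply a pointer to the literature (\cite{GGM06,althoff10a}). In those references the result is customarily obtained recursively, using the facts that zonotopes are closed under linear maps ($\mat M\langle c \mid \mat G \rangle = \langle \mat M c \mid \mat M \mat G\rangle$) and under Minkowski sums (which concatenate generator lists), applied by induction on $t$ to the one-step recursion $\reachSet{t+1}{\set S} = \mat A\,\reachSet{t}{\set S} \oplus \mat C\set V \oplus \{w\}$. Your argument flattens that induction into a single substitution into the closed-form solution~\refeq{e:linear-soln}, collecting the constant part (the center) and matching the variable part against $\genmat{\reachSet{t}{\set S}}\lambda$. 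The two routes are mathematically equivalent---your stacked coefficient vector is exactly what the iterated Minkowski sums produce---but yours has the merit of isolating the one point where the claim could fail: the mutual independence of $\lambda_0,\mu_0,\ldots,\mu_{t-1}$, i.e., that the disturbance may select a fresh element of $\set V$ at every step. That independence is precisely what justifies the Minkowski-sum (rather than common-coefficient) structure and hence the count $\countvec{\set S} + t\,\countvec{\set V}$, and your closing remark about time-constant disturbances collapsing the copies of $\genmat{\set V}$ shows you see exactly which modelling assumption the formula encodes; the recursive route gets this for free from the definition of Minkowski sum, at the cost of making it easy to overlook. Also, your verification of both inclusions (every reachable state decodes to admissible coefficients, and every admissible coefficient vector decodes to a valid $x(0)$ and $v\sig$) is what entitles you to the word \emph{exact} in the statement, and is worth keeping explicit.
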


\begin{proof}
    See~\cite{GGM06,althoff10a}
\end{proof}

\ignore{        
    \countvec{\reachSet{t}{\set S}} &= \countvec{\set S} + t \countvec{\set V}, \\
    \genvec{i}{\reachSet{t}{\set S}} &= \mat A^t \genvec{i}{\set S}, 
      \text{ for } i = 1, \ldots, \countvec{\set S}, \\
    \genvec{i + \countvec{\set S}}{\reachSet{t}{\set S}} &= 
      \mat A^{t-1} C \genvec{i}{\set V},
      \text{ for } i = 1, \ldots, \countvec{\set V}, \\
    \genvec{i + \countvec{\set S} + \countvec{\set V}}{\reachSet{t}{\set S}} &= 
      \mat A^{t-2} C \genvec{i}{\set V},
      \text{ for } i = 1, \ldots, \countvec{\set V}, \\
    &\vdots\\
    \genvec{i + \countvec{\set S} + (t-1) \countvec{\set V}}{\reachSet{t}{\set S}} &= 
      C \genvec{i}{\set V},
      \text{ for } i = 1, \ldots, \countvec{\set V}.
}

\subsection{Computational Environment}

The examples in the paper were implemented in \matlab\ (R2018a) using
CVX~\cite{cvx, gb08} (version~2.1) with the SDPT3 solver (version~4.0)
for the convex optimizations and CORA~\cite{althoff15} (2018~release)
for zonotope visualization.  Timings were taken on a Lenovo ThinkPad
X1 Yoga 1st Signature Edition laptop with an Intel Core i7-6500U (dual
core) at 2.5~GHz with 8~GB memory running Windows~10 Pro
(version~1803).

\section{Computing Invariant Sets} \label{s:invariance}

As defined in~\refeq{e:invariant-defn}, invariant sets are computed for systems without control input, so in this section we focus on the case where $\set U = \emptyset$.  In order to minimize notational complexity, we first sketch the algorithm for systems with no disturbance input before proceeding to the more general case.

\subsection{With No Inputs} \label{s:no-inputs}

In this section we will derive conditions under which a parameterized
zonotope $\set I$ is invariant with respect to the state
constraints~\refeq{e:state-constraints}.  Define
\begin{equation} \label{e:invariant-zonotope-defn}
  \begin{aligned}  
    \set I &= \langle \alpha \mid \gamma_1 \genvec{1}{\set I}, 
        \gamma_2 \genvec{2}{\set I}, \ldots, 
        \gamma_{\countvec{\set I}} \genvec{\countvec{\set I}}{\set I} \rangle, \\
      &=\langle \alpha \mid \genmat{\set I} \Gamma \rangle,
  \end{aligned}
\end{equation}
where
\[
  \Gamma = \bma \gamma_1 & 0 & \cdots & 0 \\
                0 & \gamma_2 & \cdots & 0 \\
                \vdots & \vdots & \ddots & \vdots \\
                0 & 0 & \cdots & \gamma_{\countvec{\set I}} \ema
         \in \R^{\countvec{\set I} \times \countvec{\set I}}
\]
is the diagonal matrix with vector $\gamma$ along its diagonal.  In
this parameterization, $\genmat{\set I}$ is a specified set of
generators, but the center vector $\alpha \in \R^{d_x}$ and
\emph{generator scaling vector} $\gamma \in \R^{\countvec{\set I}}$
with $\gamma \geq 0$ are free parameters.  Each element of $\gamma$ is
associated with a generator of $\set I$ and can be thought of
intuitively as the ``width'' of $\set I$ in that generator direction.

\begin{proposition}
  Assuming a system with no disturbance or control input, the reach
  set for an initial state space zonotope $\set I$ can be represented
  by a center, generator count and generator matrix given by
  \begin{equation} \label{e:zonotope-no-input-evolution}
    \begin{aligned}
      \centvec{\reachSet{t}{\set I}} &= \mat A^t \alpha
        + \sum_{s=0}^{t-1} \mat A^{t-1-s} w \\
      \countvec{\reachSet{t}{\set I}} &= \countvec{\set I} \\
      \genmat{\reachSet{t}{\set I}} &= A^t \genmat{\set I} \Gamma,
    \end{aligned}
  \end{equation}
  where the first equation can be written out elementwise as
  \begin{equation} \label{e:zonotope-no-input-evolution-center}
    \centvecelem{j}{\reachSet{t}{\set I}} = 
      \sum_{k=1}^{d_x} \left( \element{\mat A^t}{j,k} \alpha_k
      + \sum_{s=0}^{t-1} \element{\mat A^{t-1-s}}{j,k} w_k \right)
  \end{equation}
  and the last equation can be written out for the $i^{th}$
  generator elementwise as
  \begin{equation} \label{e:zonotope-no-input-evolution-elementwise}
      \genvec{j,i}{\reachSet{t}{\set I}}
        = \sum_{k=1}^{d_x} \element{\mat A^t}{j,k} \genvec{k,i}{\set I} \gamma_i.
  \end{equation}
\end{proposition}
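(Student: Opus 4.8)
The plan is to substitute the zonotope parameterization of the initial condition directly into the closed-form solution~\refeq{e:linear-soln} and then collect terms into a center and a set of generators. Since this proposition assumes neither a control nor a disturbance input, I would first set $\mat B u(s) = 0$ and $\mat C v(s) = 0$ for every $s$, so that~\refeq{e:linear-soln} collapses to
\[
  x(t) = \mat A^t x(0) + \sum_{s=0}^{t-1} \mat A^{t-1-s} w.
\]

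Next I would invoke the G-representation of $\set I$ from~\refeq{e:invariant-zonotope-defn}: by~\refeq{e:zonotope-defn}, a point $x(0)$ lies in $\set I$ exactly when $x(0) = \alpha + \genmat{\set I}\Gamma \lambda$ for some $\lambda$ with ${-\ones{\countvec{\set I}}} \leq \lambda \leq {+\ones{\countvec{\set I}}}$, and conversely every such $\lambda$ yields a point of $\set I$. Substituting this expression for $x(0)$ and using linearity of $\mat A^t$ gives
\[
  x(t) = \left( \mat A^t \alpha + \sum_{s=0}^{t-1} \mat A^{t-1-s} w \right)
         + \mat A^t \genmat{\set I}\Gamma\, \lambda,
\]
which is precisely the zonotope form~\refeq{e:zonotope-defn} with center $\mat A^t \alpha + \sum_{s=0}^{t-1} \mat A^{t-1-s} w$ and generator matrix $\mat A^t \genmat{\set I}\Gamma$, driven by the same box-constrained $\lambda$. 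This establishes the center and generator-matrix formulas; the generator count is unchanged because left-multiplication by $\mat A^t$ preserves the number of columns of $\genmat{\set I}\Gamma$, which is $\countvec{\set I}$.

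Finally, the elementwise identities~\refeq{e:zonotope-no-input-evolution-center} and~\refeq{e:zonotope-no-input-evolution-elementwise} follow by expanding the matrix--vector and matrix--matrix products into explicit sums over the column index $k$, with the scalar $\gamma_i$ pulled out as the factor attached to the $i^{th}$ generator. I expect no genuine obstacle, since the whole statement is the special case $\mat C = 0$ of Proposition~\ref{t:reach-set-evolution-no-control} and rests only on the closure of zonotopes under affine maps noted earlier. The one point that deserves explicit care is verifying that the affine image is captured \emph{exactly}, rather than merely over- or under-approximated: this holds because $\lambda$ ranges over the identical unit box before and after applying $\mat A^t$, so the map $\lambda \mapsto x(t)$ is onto the claimed zonotope.
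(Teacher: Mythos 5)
Your proof is correct, but it takes a different route from the paper's. The paper's proof is a one-line substitution: it plugs $\set V = \emptyset$ and the scaled parameterization~\refeq{e:invariant-zonotope-defn} into the already-stated reach-set formula~\refeq{e:reach-set-zonotope-no-control} of Proposition~\ref{t:reach-set-evolution-no-control}, whose own proof is deferred to external references. You instead derive the result from first principles: you specialize the closed-form solution~\refeq{e:linear-soln} to the input-free case, substitute the G-representation of $\set I$ via~\refeq{e:zonotope-defn}, and observe that the affine map $\lambda \mapsto \mat A^t(\alpha + \genmat{\set I}\Gamma\lambda) + \sum_{s=0}^{t-1}\mat A^{t-1-s}w$ carries the unit box exactly onto the claimed zonotope, which simultaneously reproves (the relevant special case of) zonotope closure under affine maps. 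Your closing remark that the statement is the $\mat C = 0$ special case of Proposition~\ref{t:reach-set-evolution-no-control} is precisely the paper's argument, so you have in effect given both proofs. What your longer route buys is self-containedness and an explicit verification of \emph{exactness} of the image (not merely containment), which the paper leaves implicit in its citations; what the paper's route buys is brevity and reuse of machinery that it needs anyway for the disturbance case.
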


\begin{proof}
  Straightforward substitution of $\set V = \emptyset$
  and~\refeq{e:invariant-zonotope-defn}
  into~\refeq{e:reach-set-zonotope-no-control}.
\end{proof}

Knowing how the reach set $\reachSet{t}{\set I}$ evolves allows us to
ensure that trajectories which start within $\set I$ stay within the
constraint set $\set X$.

\begin{proposition} \label{t:invariant-set-no-input-containment}
  Assuming a system with no disturbance or control input, $\set I
  \subseteq \invSet{[0,T]}{\set X}$ if
  \begin{equation} \label{e:invariant-no-input-box-containment-elementwise}
    \begin{aligned}
      \sum_{k=1}^{d_x} \left( \element{\mat A^t}{j,k} \alpha_k
        + \sum_{s=0}^{t-1} \element{\mat A^{t-1-s}}{j,k} w_k \right)
        - \sum_{i=0}^{\countvec{\set I}}
            \left| \sum_{k=1}^{d_x} \element{\mat A^t}{j,k}
                    \genvec{k,i}{\set I} \right| \gamma_i
      & \geq \underline x_j \\
      \sum_{k=1}^{d_x} \left( \element{\mat A^t}{j,k} \alpha_k
        + \sum_{s=0}^{t-1} \element{\mat A^{t-1-s}}{j,k} w_k \right)
        + \sum_{i=0}^{\countvec{\set I}}
            \left| \sum_{k=1}^{d_x} \element{\mat A^t}{j,k}
                    \genvec{k,i}{\set I} \right| \gamma_i
      & \leq \overline x_j
    \end{aligned}
  \end{equation}
  for all $j = 1, \ldots, d_x$ and $t = 0, \ldots, T$.  More
  compactly,
  \begin{equation} \label{e:invariant-no-input-box-containment}
    \begin{aligned}
       \mat A^t \alpha + \sum_{s=0}^{t-1} \mat A^{t-1-s} w
         - \left|\mat A^t \genmat{\set I}\right| \gamma
        &\geq \underline x, \\
       \mat A^t \alpha + \sum_{s=0}^{t-1} \mat A^{t-1-s} w
         + \left|\mat A^t \genmat{\set I}\right| \gamma
        &\leq \overline x. \\
    \end{aligned}
  \end{equation}
  for all $t = 0, \ldots, T$.
\end{proposition}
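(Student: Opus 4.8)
The plan is to reduce invariance of $\set I$ to a finite family of reach-set containments and then invoke Lemma~\ref{t:zonotope-box-containment}. First I would note that with no control or disturbance input the dynamics~\refeq{e:linear-dynamics} are deterministic, so the set of states attainable at time $t$ from initial conditions in $\set I$ is exactly $\reachSet{t}{\set I}$. Consequently the defining requirement of~\refeq{e:invariant-defn}---that every trajectory starting in $\set I$ stay inside $\set X$ for all $t \in [0,T]$---holds precisely when $\reachSet{t}{\set I} \subseteq \set X$ for each $t = 0, \ldots, T$. This translation is the key step, since it decouples the time steps and lets each containment be checked independently.

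Next I would apply the preceding reach-set evolution proposition, which represents $\reachSet{t}{\set I}$ as a zonotope with center $\mat A^t \alpha + \sum_{s=0}^{t-1} \mat A^{t-1-s} w$ and generator matrix $\mat A^t \genmat{\set I} \Gamma$ as in~\refeq{e:zonotope-no-input-evolution}. With this G-representation in hand I would set $\set B = \set X$ in Lemma~\ref{t:zonotope-box-containment}, turning each containment $\reachSet{t}{\set I} \subseteq \set X$ into the pair of inequalities bounding the center, shifted by the row sums of the absolute generator entries, between $\underline x$ and $\overline x$.

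The only bookkeeping is the absolute-value term. Because $\Gamma$ is diagonal with $\gamma_i \geq 0$, the $(j,i)$ entry of $\mat A^t \genmat{\set I} \Gamma$ is $\gamma_i \sum_k \element{\mat A^t}{j,k} \genvec{k,i}{\set I}$, so each $\gamma_i$ pulls out of the absolute value; summing over $i$ gives $\sum_i \bigl| \sum_k \element{\mat A^t}{j,k} \genvec{k,i}{\set I} \bigr| \gamma_i$, which is exactly the $j$-th component of $|\mat A^t \genmat{\set I}| \gamma$. Substituting recovers the elementwise conditions~\refeq{e:invariant-no-input-box-containment-elementwise} and their compact form~\refeq{e:invariant-no-input-box-containment}. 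I anticipate no real difficulty here; the one subtlety worth flagging is that the statement claims only sufficiency (``if''), whereas the exactness of the reach-set representation actually makes the conditions equivalent to $\set I \subseteq \invSet{[0,T]}{\set X}$.
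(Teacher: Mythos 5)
Your proposal is correct and follows essentially the same route as the paper's proof: reduce invariance to the containments $\reachSet{t}{\set I} \subseteq \set X$ for $t = 0, \ldots, T$ via~\refeq{e:reach-defn}, substitute the zonotope representation~\refeq{e:zonotope-no-input-evolution} of the reach set into Lemma~\ref{t:zonotope-box-containment} with $\set B = \set X$, and pull each $\gamma_i \geq 0$ out of the absolute value to obtain~\refeq{e:invariant-no-input-box-containment-elementwise} and then~\refeq{e:invariant-no-input-box-containment}. Your closing observation that the conditions are in fact equivalent (not merely sufficient) is a valid strengthening, since the reach-set representation is exact for input-free affine dynamics and Lemma~\ref{t:zonotope-box-containment} is stated as an equivalence, but it does not change the substance of the argument.
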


\begin{proof}
  To show that $\set I \subseteq \invSet{[0, T]}{\set X}$,
  by~\refeq{e:invariant-defn} we need to show for $x(0) \in \set I$
  that $x(t) \in \set X$.  If $x(0) \in \set I$ then $x(t) \in
  \reachSet{t}{\set I}$ by~\refeq{e:reach-defn}.
  Plugging~\refeq{e:zonotope-no-input-evolution-center}
  and~\refeq{e:zonotope-no-input-evolution-elementwise}
  into~\refeq{e:zonotope-box-containment-elementwise}
  yields~\refeq{e:invariant-no-input-box-containment-elementwise}, and
  so by~\refeq{e:state-constraints} and
  Lemma~\ref{t:zonotope-box-containment}, the
  constraint~\refeq{e:invariant-no-input-box-containment-elementwise}
  implies
  \begin{equation} \label{e:reach-no-input-containment}
    \reachSet{t}{\set I} \subseteq \set X 
      \text{ for all } t = 0, \ldots, T;
  \end{equation}
  consequently, $x(t) \in \set X$.  Note that we can move $\gamma_i$
  outside the absolute value because $\gamma_i \geq 0$.
  Rearranging~\refeq{e:invariant-no-input-box-containment-elementwise}
  and taking advantage of the fact $\Gamma \ones{\countvec{\set I}} =
  \gamma$ yields~\refeq{e:invariant-no-input-box-containment}.
\end{proof}

\ignore{
  Combine~\refeq{e:state-constraints},~\refeq{e:zonotope-box-constraints-elementwise},~\refeq{e:invariant-no-input-containment},~\refeq{e:zonotope-no-input-evolution}
  and~\refeq{e:zonotope-no-input-evolution-elementwise} we arrive at
  the constraints

  In subsequent
derivations, we will take advantage
of~\refeq{e:zonotope-box-constraints} to write the constraints for all
dimensions at a fixed time $t$ in more compact form as
}

\begin{remark}
  The constraints~\refeq{e:invariant-no-input-box-containment} are
  linear in $\alpha$ and $\gamma$.
\end{remark}

Ideally, we would then seek the set $\set I$ of maximum volume which satisfies~\refeq{e:invariant-no-input-box-containment}.  Although an analytic formula for zonotope volume exists~\cite{GK10}, it is combinatorially complex in the number of generators and hence we settle for the simpler heuristic of maximizing the sum of the elements of $\gamma$ (and thereby the sum of the ``widths'' of the generators).  Our algorithm can therefore be written as an optimization problem
\begin{equation} \label{e:invariant-optimization-no-input}
  \begin{aligned}
  \max_{\alpha, \gamma} \; &\ones{\countvec{\set I}}^T \gamma \\
  \text{such that } & \gamma \geq 0 \\
  \text{and } & \text{\refeq{e:invariant-no-input-box-containment} holds }
                \forall t \in 0, \ldots, T\\
  \end{aligned}
\end{equation}

\begin{remark}
  The optimization~\refeq{e:invariant-optimization-no-input} is a
  linear program with $d_x + \countvec{\set I}$ decision variables,
  $\countvec{\set I}$ non-negativity constraints, and $2 d_x (T+1)$
  general constraints
  from~\refeq{e:invariant-no-input-box-containment}.
\end{remark}

\subsection{With Disturbance Inputs} \label{s:with-disturbance}

For systems with uncertainty in the form of a disturbance input $v \in
\set V \neq \emptyset$, we must include the effect of $\set V$ on
$\reachSet{t}{\set I}$ when we construct the constraints for our
optimization.

\begin{proposition} \label{t:invariant-set-disturb-containment}
  Assuming a system with no control input, $\set I \subseteq
  \invSet{[0,T]}{\set X}$ if
  \begin{equation} \label{e:invariant-box-containment}
    \begin{aligned}
      \left( \begin{gathered}
        \mat A^t \alpha + \sum_{s=0}^{t-1} \mat A^{t-1-s}
                          \left( \mat C \centvec{\set V} + w \right) \\
          - \left|\mat A^t \genmat{\set I}\right| \gamma
          - \sum_{s=0}^{t-1} \left(
               \left| \mat A^{t-1-s} \mat C \genmat{\set V} \right|
                           \ones{\countvec{\set V}} \right)
      \end{gathered} \right)
      &\geq \underline x, \\
      \left( \begin{gathered}
        \mat A^t \alpha + \sum_{s=0}^{t-1} \mat A^{t-1-s}
                          \left( \mat C \centvec{\set V} + w \right) \\
          + \left|\mat A^t \genmat{\set I}\right| \gamma
          + \sum_{s=0}^{t-1} \left(
               \left| \mat A^{t-1-s} \mat C \genmat{\set V} \right|
                           \ones{\countvec{\set V}} \right)
      \end{gathered} \right)
      &\leq \overline x. \\
    \end{aligned}
  \end{equation}
  for all $t = 0, \ldots, T$.
\end{proposition}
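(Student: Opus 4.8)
The plan is to mirror the proof of Proposition~\ref{t:invariant-set-no-input-containment}, replacing the input-free reach-set formula with the disturbance-aware formula of Proposition~\ref{t:reach-set-evolution-no-control}.

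First I would reduce the set containment to a reach-set containment. By~\refeq{e:invariant-defn}, to show $\set I \subseteq \invSet{[0,T]}{\set X}$ it suffices to verify that every trajectory with $x(0) \in \set I$ satisfies $x(t) \in \set X$ for all $v\sig$ and all $t = 0, \ldots, T$. Since $x(0) \in \set I$ implies $x(t) \in \reachSet{t}{\set I}$ by~\refeq{e:reach-defn}, it is enough to establish $\reachSet{t}{\set I} \subseteq \set X$ for each such $t$.

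Next I would instantiate Proposition~\ref{t:reach-set-evolution-no-control} with $\set S = \set I$, taking $\centvec{\set I} = \alpha$ and generator matrix $\genmat{\set I}\Gamma$ as in~\refeq{e:invariant-zonotope-defn}. This yields the stated center $\centvec{\reachSet{t}{\set I}}$ together with the horizontally concatenated generator matrix
\[
  \genmat{\reachSet{t}{\set I}} =
    \bma \mat A^t \genmat{\set I}\Gamma
         & \mat A^{t-1}\mat C\genmat{\set V}
         & \cdots
         & \mat C\genmat{\set V} \ema.
\]
I would then apply Lemma~\ref{t:zonotope-box-containment} to this reach-set zonotope; the containment $\reachSet{t}{\set I} \subseteq \set X$ reduces via~\refeq{e:zonotope-box-containment} to the requirement that the width term $\left| \genmat{\reachSet{t}{\set I}} \right| \ones{\countvec{\reachSet{t}{\set I}}}$, added to and subtracted from the center, keep $\reachSet{t}{\set I}$ inside $[\underline x, \overline x]$.

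The key observation is that this width term splits additively along the block structure of the generator matrix. Because an elementwise absolute value followed by right-multiplication by $\ones{}$ simply sums the absolute column entries, the $\set I$-block contributes $\left| \mat A^t \genmat{\set I}\Gamma \right| \ones{\countvec{\set I}}$ and each disturbance block contributes $\left| \mat A^{t-1-s}\mat C\genmat{\set V} \right| \ones{\countvec{\set V}}$. The only point requiring care---and the main, though mild, obstacle---is the $\set I$-block: since $\gamma \geq 0$, the diagonal scaling commutes with the absolute value, so $\left| \mat A^t \genmat{\set I}\Gamma \right| = \left| \mat A^t \genmat{\set I} \right|\Gamma$, and then $\Gamma\ones{\countvec{\set I}} = \gamma$ recovers the term $\left| \mat A^t \genmat{\set I} \right|\gamma$ exactly as in~\refeq{e:invariant-box-containment}. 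The disturbance blocks carry no free scaling and appear directly. Assembling the resulting lower and upper inequalities of Lemma~\ref{t:zonotope-box-containment}, and noting they must hold for every $t = 0, \ldots, T$, reproduces~\refeq{e:invariant-box-containment} and hence establishes $\set I \subseteq \invSet{[0,T]}{\set X}$.
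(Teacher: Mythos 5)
Your proof is correct and takes essentially the same route as the paper's: it plugs the disturbance-aware reach-set formula~\refeq{e:reach-set-zonotope-no-control} into the box-containment condition of Lemma~\ref{t:zonotope-box-containment} and then reuses the trajectory argument from Proposition~\ref{t:invariant-set-no-input-containment}. The only difference is that you make explicit the blockwise splitting of $\left|\genmat{\reachSet{t}{\set I}}\right|\ones{\countvec{\reachSet{t}{\set I}}}$ and the identity $\left|\mat A^t \genmat{\set I}\Gamma\right|\ones{\countvecI} = \left|\mat A^t \genmat{\set I}\right|\gamma$ (valid since $\gamma \geq 0$), details the paper leaves implicit.
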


\begin{proof}
  Plugging~\refeq{e:reach-set-zonotope-no-control}
  into~\refeq{e:zonotope-box-containment} demonstrates
  that~\refeq{e:invariant-box-containment} implies $\reachSet{t}{\set
    I} \subseteq \set X$ when $\set V \neq \emptyset$.  The remainder
  of the proof follows that of
  Proposition~\ref{t:invariant-set-no-input-containment}.
\end{proof}

Note that the generators arising from the disturbance input appear in
the constraints but are not scaled.  The
constraints~\refeq{e:invariant-box-containment} are still linear in
$\alpha$ and $\gamma$, so we can compute an invariant set using the
linear program optimization~\refeq{e:invariant-optimization-no-input}
with~\refeq{e:invariant-box-containment} substituted
for~\refeq{e:invariant-no-input-box-containment}.

\subsection{Example: Rotational Dynamics}
\label{s:invariant-example}

\begin{figure}
  \includegraphics[width=0.22\textwidth]{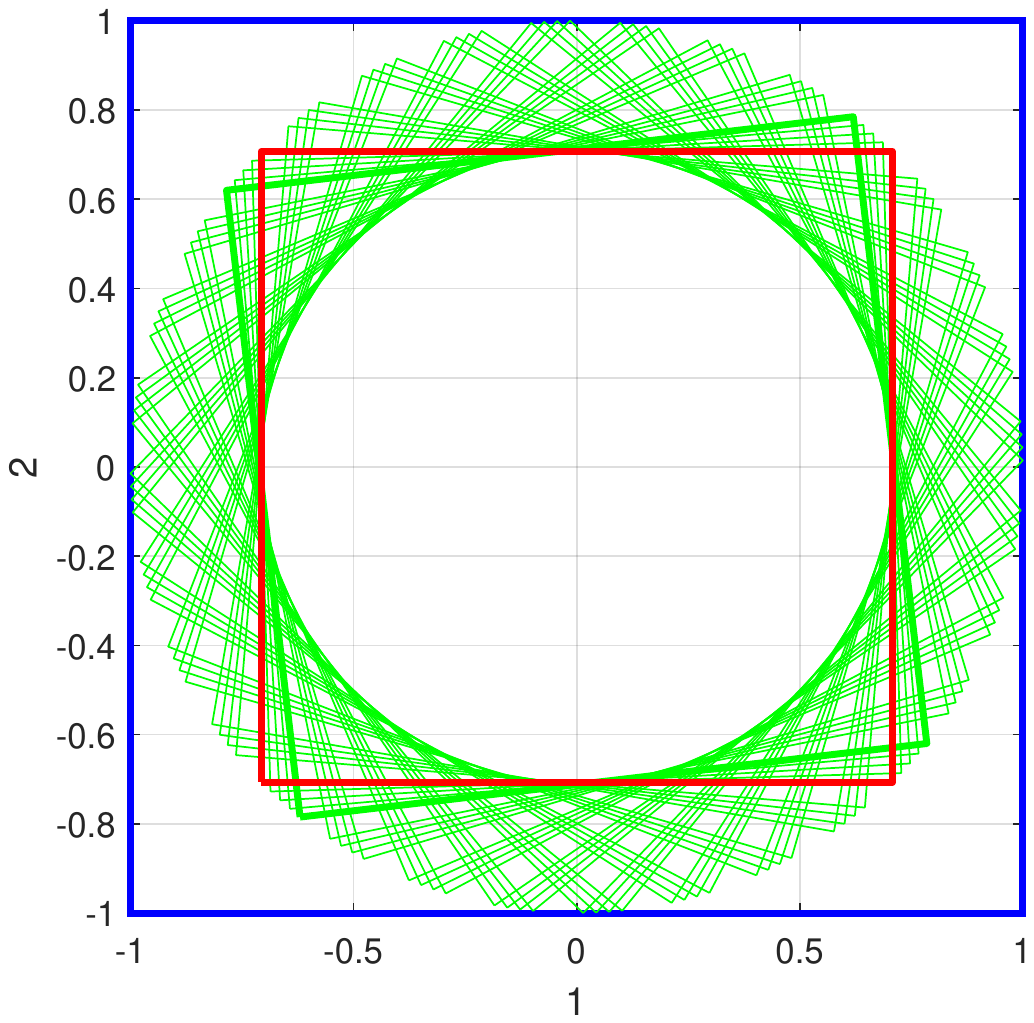}
  \includegraphics[width=0.22\textwidth]{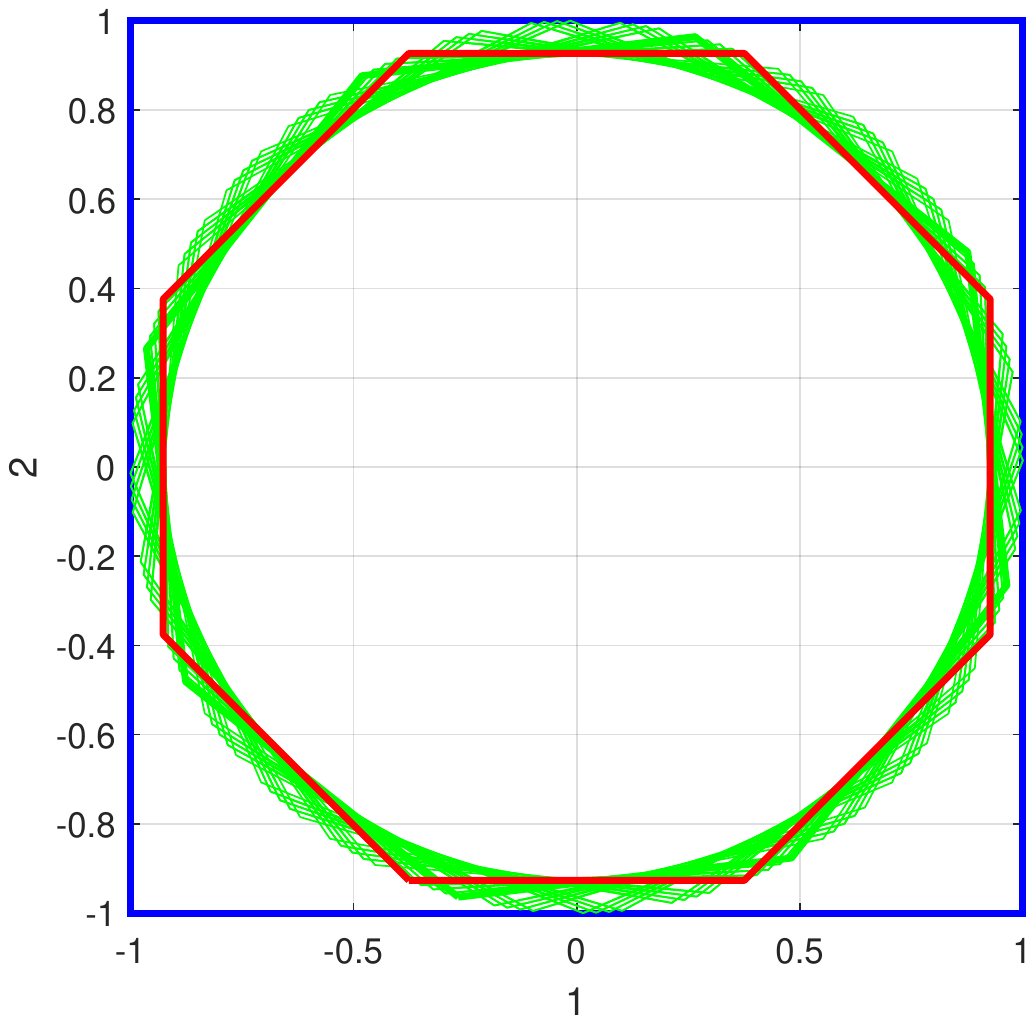} \\
  \includegraphics[width=0.22\textwidth]{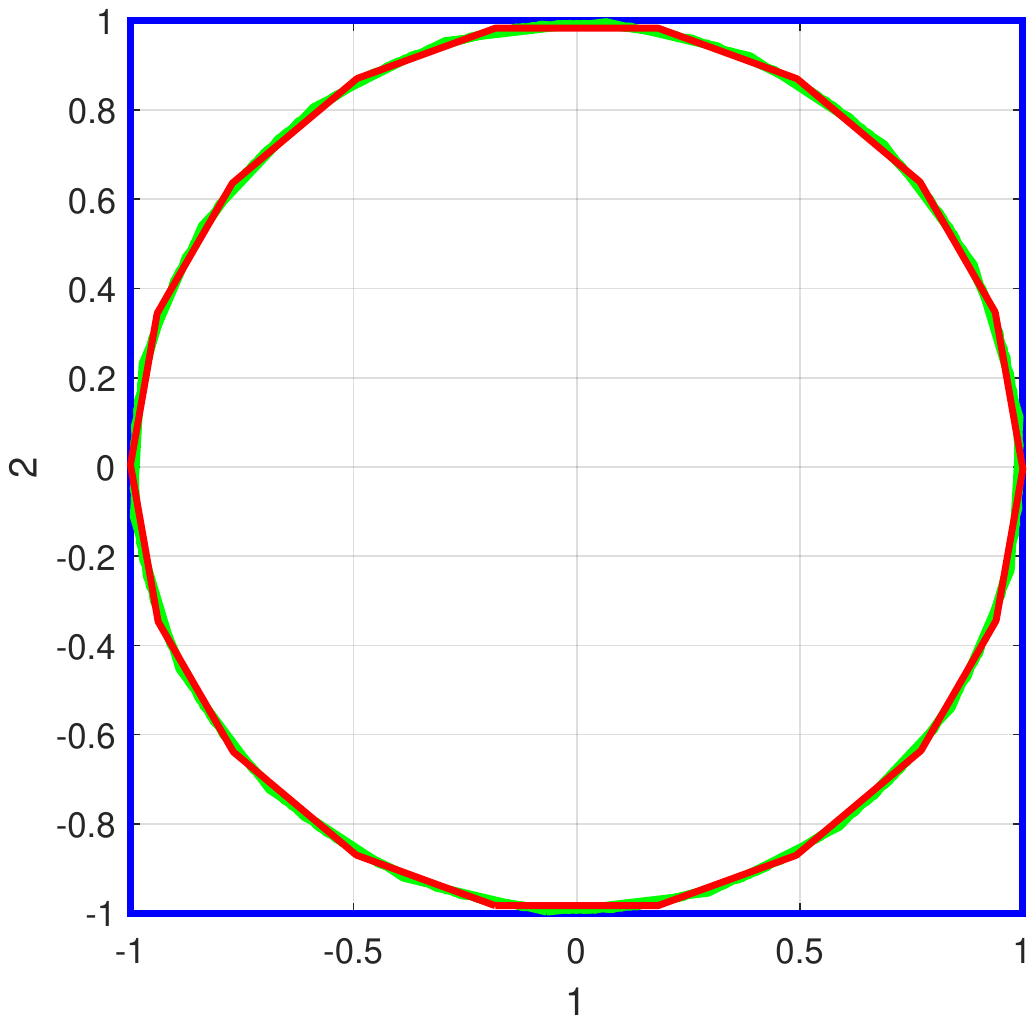}
  \includegraphics[width=0.22\textwidth]{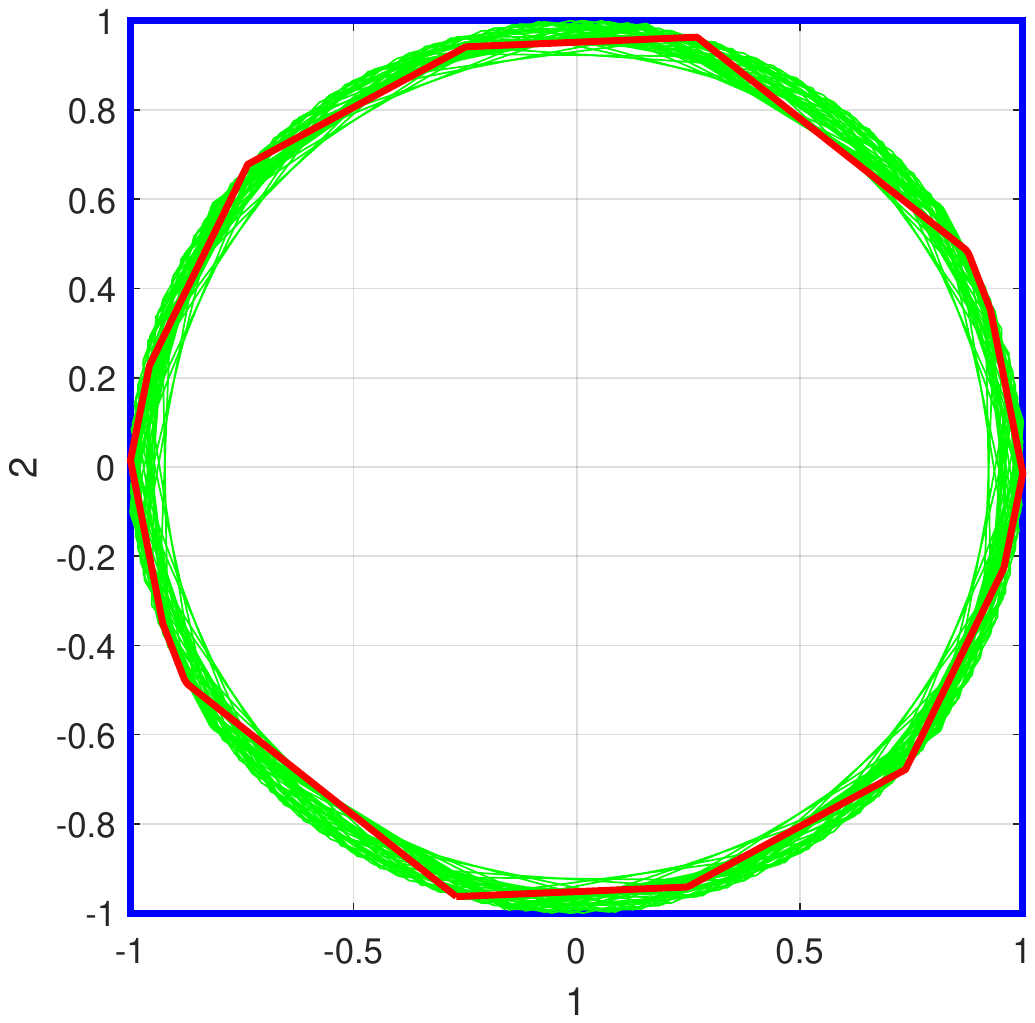}
  \caption{Computed invariant set $\set I$ (red thick line) for the
    rotation example with varying numbers of generators in $\set I$.
    Also shown are the constraint set $\set X$ (blue thick line) and
    $\reachSet{t}{\set I}$ (thin green lines) for $t = 1, \ldots T$.
    Top left: Two generators (the coordinate axes).  Top right: Four
    generators (coordinate axes plus diagonals).  Bottom left: Nine
    generators (equally spaced around top half circle).  Bottom right:
    Sixteen random generators (only seven have scaling $\gamma_i >
    0.01$).}
  \label{f:invariant-example}
\end{figure}

To demonstrate our algorithm and the effects of the choice of
$\genmat{\set I}$, we consider a system with rotational dynamics.
Starting from the continuous time system
\[
    \dot x = \bma 0 & -1 \\ +1 & 0 \ema x
\]
we use the matrix exponential with time step $0.2$ to construct the
discrete time system
\begin{equation} \label{e:toy-rotation-dynamics}
  x(t+1) = \bma +0.9801 & -0.1987 \\ +0.1987 & +0.9801 \ema x(t).
\end{equation}
We use the optimization~\refeq{e:invariant-optimization-no-input} with
$\underline x = {-\ones{2}}$, $\overline x = +\ones{2}$ and $T = 32$
to compute an invariant set after slightly more than one full
rotation.  The true invariance kernel in this case is the largest
circle contained in $\set X$.  Figure~\ref{f:invariant-example} shows
the computed invariant sets with different numbers of generators in
$\set I$.  Run times for the optimization were below 3~seconds for all
of these examples.

\begin{figure}
  \includegraphics[width=0.3\textwidth]{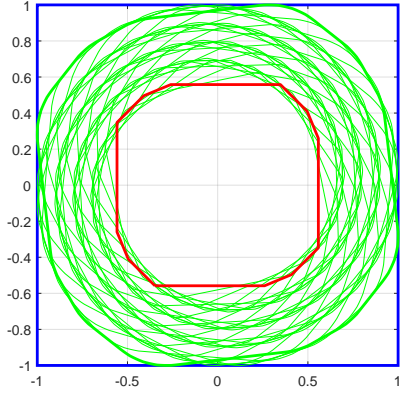}
  \caption{Computed invariant set for the rotation example with
    disturbance input using eight equally spaced generators for $\set
    I$.}
  \label{f:invariant-example-disturb}
\end{figure}

Figure~\ref{f:invariant-example-disturb} illustrates the results of
the calculation if we introduce a disturbance with
\[
    \mat C = \bma 1 & 0 \\ 0 & 1 \ema
    \qquad
    \set V = \left\langle
      \bma 0 \\ 0 \ema \left|
      \bma 0.05 & 0 \\ 0 & 0.05 \ema \right. \right\rangle
\]
into the rotational dynamics~\refeq{e:toy-rotation-dynamics}.

\section{Computing Viable Sets} \label{s:viability}

Accommodating the disturbance input for invariant sets was
notationally complicated but conceptually straightforward: The
constraints on the reach sets of the initial set had to take into
account the effect of an input which could take on any value in $\set
V$ at any state at any time.  Achieving viability typically requires
that the control input be chosen based on the current state;
consequently, we need to tackle the control inputs in a different
manner.  To simplify the notation we work through the algorithm
without disturbance inputs in this section, and consequently compute
viable sets.  We also omit the drift term for now.

\subsection{Augmenting Generators with Control}
\label{s:viability-augment-generators}

Define the function
\begin{equation} \label{e:augmented-zonotope-defn}
  \begin{aligned}  
    \set J&(\set I, \beta, \bigphi)\\
      &= \left\langle \bma \alpha \\ \beta \ema
         \;\left|\; \bma \gamma_1 \genvec{1}{\set I} \\ \phi_1 \ema, 
                    \bma \gamma_2 \genvec{2}{\set I} \\ \phi_2 \ema,
                    \ldots, 
                    \bma \gamma_{\countvecI}
                         \genvec{\countvecI}{\set I} \\
                         \phi_{\countvecI} \ema,
                         \right. \right\rangle,\\
      &=\left\langle \bma \alpha \\ \beta \ema
        \;\left|\; \bma \genmat{\set I} \Gamma \\ \bigphi \ema
        \right. \right\rangle,
  \end{aligned}
\end{equation}
where $\beta \in \R^{d_u}$, $\phi_i \in \R^{d_u}$ and
\[
    \bigphi =
      \bma \phi_1 & \phi_2 & \cdots & \phi_{\countvecI} \ema
      \in \R^{d_u \times \countvecI}.
\]
Before proceeding, we note:
\begin{compactitem}
\item The value of $\set J(\set I, \beta, \bigphi)$ is a zonotope in
  $\R^{d_x + d_u}$.
\item The set $\langle \beta \mid \bigphi \rangle$ is a zonotope in
  $\R^{d_u}$, but $\set J(\set I, \beta, \bigphi) \neq \set I \times
  \langle \beta \mid \bigphi \rangle$.
\item Unlike the diagonal matrices $\Gamma$ (and $\Psi$ encountered
  below), the matrix $\bigphi$ is dense: All entries may be nonzero.
\item In the remainder of this work we will often refer to a
  collection of vectors and matrices $\{ \beta(s), \bigphi(s) \}_{s =
    0}^t$.  The fact that the collection is parameterized by time does
  not imply any direct temporal dependence between its elements.
\end{compactitem}

\begin{proposition}
  Given $\{ \beta(s), \bigphi(s) \}_{s = 0}^{t-1}$ and assuming a system
  with no disturbance input or drift, the reach set for an initial
  state space zonotope $\set I$ can be represented by a center,
  generator count and generator matrix given by
  \begin{equation} \label{e:zonotope-no-disturb-evolution}
  \begin{aligned}
    \centvec{\reachSet{t}{\set I}} &= \mat A^t \alpha
      + \sum_{s=0}^{t-1} \mat A^{t-1-s} \mat B \beta(s) \\
    \countvec{\reachSet{t}{\set I}} &= \countvecI \\
    \genmat{\reachSet{t}{\set I}} &= \mat A^t \genmat{\set I} \Gamma
      + \sum_{s=0}^{t-1} \mat A^{t-1-s} \mat B \bigphi(s)
  \end{aligned}
  \end{equation}
\end{proposition}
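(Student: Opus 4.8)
The plan is to prove this by direct substitution into the closed-form solution~\refeq{e:linear-soln}, exploiting the fact that the augmented zonotope~\refeq{e:augmented-zonotope-defn} ties the control inputs to the very same generator parameter that selects the initial state. Specializing~\refeq{e:linear-soln} to the case $\set V = \emptyset$ (so the $\mat C v\sig$ term vanishes) and $w = 0$ leaves $x(t) = \mat A^t x(0) + \sum_{s=0}^{t-1} \mat A^{t-1-s} \mat B u(s)$, so the entire task reduces to expressing $x(0)$ and each $u(s)$ through a shared parameter $\lambda$ and collecting terms.

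First I would read off, from the augmented zonotope $\set J(\set I, \beta(s), \bigphi(s))$ in~\refeq{e:augmented-zonotope-defn}, that a point with parameter $\lambda$ satisfying ${-\ones{\countvecI}} \leq \lambda \leq +\ones{\countvecI}$ has state component $x(0) = \alpha + \genmat{\set I}\Gamma\lambda$ -- which is exactly the scaled initial zonotope $\set I$ of~\refeq{e:invariant-zonotope-defn} -- and control component $u(s) = \beta(s) + \bigphi(s)\lambda$. The crucial modeling point I would state explicitly is that it is the \emph{same} $\lambda$ appearing in $x(0)$ and in every $u(s)$: the control law is a set-valued affine feedback on the generator coordinates of the initial condition, so once $\lambda$ is fixed the whole trajectory, states and inputs alike, is determined.

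Next I would substitute these expressions into the specialized solution and split the result into a part independent of $\lambda$ and a part linear in $\lambda$,
\begin{equation*}
  x(t) = \underbrace{\mat A^t \alpha + \sum_{s=0}^{t-1} \mat A^{t-1-s} \mat B \beta(s)}_{\text{center}} + \underbrace{\left( \mat A^t \genmat{\set I}\Gamma + \sum_{s=0}^{t-1} \mat A^{t-1-s} \mat B \bigphi(s) \right)}_{\text{generator matrix}} \lambda.
\end{equation*}
As $\lambda$ ranges over the box ${-\ones{\countvecI}} \leq \lambda \leq +\ones{\countvecI}$, comparison with the zonotope definition~\refeq{e:zonotope-defn} shows $x(t)$ traces out a zonotope whose center and generator matrix are precisely the two bracketed expressions, establishing~\refeq{e:zonotope-no-disturb-evolution}.

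I do not expect a genuine obstacle, since the argument is a one-line linearity computation; the only thing needing care is bookkeeping, and the point I would emphasize is that the generator count stays fixed at $\countvecI$ rather than growing with $t$. Because the control is pinned to the existing $\lambda$ coordinates instead of introducing fresh independent ones, $\mat A^t \genmat{\set I}\Gamma + \sum_{s} \mat A^{t-1-s}\mat B \bigphi(s)$ is an addition of matrices with a common number of columns, not a horizontal concatenation. This is exactly the structural contrast with the disturbance case in Proposition~\ref{t:reach-set-evolution-no-control}, where every time step appended $\countvec{\set V}$ new generators; making that contrast visible is, I think, the main expository purpose of the proof.
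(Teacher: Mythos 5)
Your proof is correct, and it takes a more direct route than the paper's. The paper argues recursively: with $\set V = \emptyset$ and $w = 0$ it writes the dynamics as the single linear map $x(s+1) = \bma \mat A & \mat B \ema \bma x(s) \\ u(s) \ema$, identifies $\reachSet{1}{\set I}$ as the image of the augmented zonotope $\set J(\set I, \beta(0), \bigphi(0))$ under that map, reads off the $t=1$ center $\mat A\alpha + \mat B\beta(0)$ and generator matrix $\mat A\genmat{\set I}\Gamma + \mat B\bigphi(0)$ from closure of zonotopes under linear transformations, and then leaves the general $t$ to induction on the time step. You instead unroll the whole recursion at once: substituting the shared-parameter representation $x(0) = \alpha + \genmat{\set I}\Gamma\lambda$ and $u(s) = \beta(s) + \bigphi(s)\lambda$ into the explicit solution~\refeq{e:linear-soln} and collecting the terms linear in $\lambda$ yields the claimed center and generator matrix in one computation. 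Both arguments rest on the same essential insight, which you state explicitly and the paper leaves embedded in the definition of $\set J$: every input is pinned to the \emph{same} generator coordinates $\lambda$ as the initial state, which is exactly why the generator matrices add (common column count $\countvecI$) rather than concatenate as in the disturbance case of Proposition~\ref{t:reach-set-evolution-no-control}. What your version buys is brevity and transparency --- no induction, and the fixed generator count is immediately visible; what the paper's one-step-transform formulation buys is that it composes naturally with the superposition arguments invoked later for~\refeq{e:zonotope-control-disturb-evolution} and~\refeq{e:zonotope-full-dynamics-evolution}, where per-step generators genuinely are appended.
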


\begin{proof}
For a system with $\set V = \emptyset$ and $w = 0$, the
dynamics~\refeq{e:linear-dynamics} can be written as the linear
transform
\[
    x(s+1) = \bma \mat A & \mat B \ema \bma x(s) \\ u(s) \ema.
\]
And therefore
\[
  \begin{aligned}
    \bma x(0) \\ u(0) \ema \in \set J&(\set I, \beta(0), \bigphi(0)) \\
    & \;\Leftrightarrow\;
      x(1) \in \bma \mat A & \mat B \ema
        \set J(\set I, \beta(0), \bigphi(0)), \\
    & \;\Leftrightarrow\;
      \reachSet{1}{\set I} =
      \bma \mat A & \mat B \ema \set J(\set I, \beta(0), \bigphi(0)).
  \end{aligned}
\]
Since a linear transformation of a zonotope is a zonotope, we can
apply the linear transformation to~\refeq{e:augmented-zonotope-defn}
to determine
\[
  \begin{aligned}
    \centvec{\reachSet{1}{\set I}} &= \mat A \alpha + \mat B \beta(0), \\
    \countvec{\reachSet{1}{\set I}} &= \countvecI, \\
    \genmat{\reachSet{1}{\set I}} &= \mat A \genmat{\set I} \Gamma
      + \mat B \bigphi(0),
  \end{aligned}.
\]
The result~\refeq{e:zonotope-no-disturb-evolution} can then be derived
by induction on $s = 1, \ldots, T-1$.
\end{proof}

With this evolution formula for the initial zonotope $\set I$, we can
deduce the necessary constraints for viability.

\begin{proposition} \label{t:viable-set-containment-no-disturb}
  Given $\{ \beta(t), \bigphi(t) \}_{t = 0}^{T-1}$ and assuming a system
  with no disturbance input or drift, $\set I \subseteq
  \viabSet{[0,T]}{\set X}$ if
  \begin{equation} \label{e:viable-state-no-disturb-box-containment}
  \begin{aligned}
    \left( \begin{gathered}
      \mat A^t \alpha + \sum_{s=0}^{t-1} \mat A^{t-1-s} \mat B \beta(s) \\      
        - \left| \mat A^t \genmat{\set I} \Gamma
                 + \sum_{s=0}^{t-1} \mat A^{t-1-s} \mat B \bigphi(s)\right|
          \ones{\countvecI}
     \end{gathered} \right)
    &\geq \underline x, \\
    \left( \begin{gathered}
      \mat A^t \alpha + \sum_{s=0}^{t-1} \mat A^{t-1-s} \mat B \beta(s) \\      
        + \left| \mat A^t \genmat{\set I} \Gamma
                 + \sum_{s=0}^{t-1} \mat A^{t-1-s} \mat B \bigphi(s)\right|
          \ones{\countvecI}
     \end{gathered} \right)
     &\leq \overline x
  \end{aligned}
  \end{equation}
  and
  \begin{equation} \label{e:viable-input-no-disturb-box-containment}
  \begin{aligned}
    \beta(t) - \left| \bigphi(t) \right| \ones{\countvecI}
    & \geq \underline u, \\
    \beta(t) + \left| \bigphi(t) \right| \ones{\countvecI}
    & \leq \overline u
  \end{aligned}
  \end{equation}
  for $t = 0, \ldots, T$.
\end{proposition}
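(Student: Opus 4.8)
The plan is to exhibit, for each $x(0) \in \set I$, an explicit admissible control signal under which the trajectory stays in $\set X$, with the construction hinging on reusing the generator coordinates of the initial state as the feedback variable. Concretely, since $x(0) \in \set I$, by \refeq{e:lambda-defn} there is some $\lambda$ with $-\ones{\countvecI} \leq \lambda \leq +\ones{\countvecI}$ and $x(0) = \alpha + \genmat{\set I}\Gamma\lambda$. I would then define the control $u(t) = \beta(t) + \bigphi(t)\lambda$ for $t = 0, \ldots, T-1$, using the \emph{same} $\lambda$ at every time step; this is exactly the control block of the augmented zonotope \refeq{e:augmented-zonotope-defn} evaluated at the coordinate $\lambda$, so it realizes the set-valued feedback encoded by $\bigphi(t)$.

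The first key step is to verify that this control drives the trajectory along the reach-set parameterization with the coordinate $\lambda$ held fixed, that is,
\[
    x(t) = \centvec{\reachSet{t}{\set I}} + \genmat{\reachSet{t}{\set I}}\lambda,
\]
with $\centvec{\reachSet{t}{\set I}}$ and $\genmat{\reachSet{t}{\set I}}$ given by the evolution formula \refeq{e:zonotope-no-disturb-evolution}. This is a short induction on $t$: the base case is the definition of $\lambda$, and the inductive step substitutes $x(t+1) = \mat A x(t) + \mat B u(t)$ and collects the $\lambda$-independent and $\lambda$-linear terms, which reproduce exactly the center and generator recursions already established. In particular $x(t) \in \reachSet{t}{\set I}$ because $\lambda$ lies in the unit box.

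With the trajectory pinned inside $\reachSet{t}{\set I}$, the two blocks of constraints finish the argument via Lemma~\ref{t:zonotope-box-containment}. The state constraints \refeq{e:viable-state-no-disturb-box-containment} are precisely the box-containment conditions applied to the reach-set center and generator matrix \refeq{e:zonotope-no-disturb-evolution}, so they give $\reachSet{t}{\set I} \subseteq \set X$ and hence $x(t) \in \set X$ for every $t$. Likewise, the input constraints \refeq{e:viable-input-no-disturb-box-containment} are the box-containment conditions for the control zonotope $\langle \beta(t) \mid \bigphi(t) \rangle$ inside $\set U$; since $u(t) = \beta(t) + \bigphi(t)\lambda$ with $\lambda$ in the unit box, this yields $u(t) \in \set U$ for all $t$. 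Because $x(0) \in \set I$ was arbitrary, the existence of such a control for every initial point gives $\set I \subseteq \viabSet{[0,T]}{\set X}$ by \refeq{e:viable-defn}.

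The one genuinely delicate point, and the reason the state and input constraints decouple so cleanly, is that the absolute value in \refeq{e:viable-state-no-disturb-box-containment} must wrap the \emph{entire} combined generator matrix $\mat A^t\genmat{\set I}\Gamma + \sum_s \mat A^{t-1-s}\mat B\bigphi(s)$ rather than the individual summands. Unlike the input-free case where $\gamma \geq 0$ let us pull the scaling outside the absolute value, the feedback entries $\bigphi(s)$ are sign-indefinite and may cancel against the state generators; keeping them inside a single $|\cdot|$ is what makes the containment both correct and still linearly representable. I would therefore flag explicitly that committing to one $\lambda$ for the whole horizon, rather than re-selecting it at each step, is what couples the state evolution to the control and is the crux of why augmenting the generators with $\bigphi(t)$ certifies viability rather than mere invariance.
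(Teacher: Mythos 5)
Your proof is correct, and its skeleton matches the paper's: construct an admissible input from $\beta(t)$ and $\bigphi(t)$, then apply Lemma~\ref{t:zonotope-box-containment} to \refeq{e:viable-input-no-disturb-box-containment} to get $u(t) \in \set U$ and to \refeq{e:viable-state-no-disturb-box-containment} to get $\reachSet{t}{\set I} \subseteq \set X$. The genuine difference is how the generator coordinate is chosen. The paper's proof is closed-loop: it sets $u(t) = \beta(t) + \bigphi(t)\,\lambda(x(t), \reachSet{t}{\set I})$, re-reading a (not necessarily unique) coordinate from the \emph{current} state at every step. You instead commit to a single $\lambda$ extracted from $x(0)$ and run it open-loop, proving by explicit induction that $x(t) = \centvec{\reachSet{t}{\set I}} + \genmat{\reachSet{t}{\set I}}\,\lambda$ for all $t$. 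Your induction in fact tightens a step the paper glosses over: its claim that ``$x(0) \in \set I$ implies $x(t) \in \reachSet{t}{\set I}$ by \refeq{e:reach-defn}'' is mildly circular, since the zonotope of \refeq{e:zonotope-no-disturb-evolution} is the reach set under the very inputs being constructed, and membership of $x(t)$ in that zonotope is precisely what your induction establishes. What the paper's feedback form buys in exchange is that the identical construction carries over to Propositions~\ref{t:viable-set-containment-control-disturb} and~\ref{t:discriminating-set-containment}: once disturbance generators enter $\reachSet{t}{\set I}$, a coordinate frozen at $t = 0$ can no longer parameterize the realized state (the disturbance shifts which $\lambda$ the state corresponds to), so the coordinate must be re-selected from $x(t)$ at each step and the control becomes a genuine set-valued feedback law. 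Relatedly, one framing correction: the state--input coupling comes from sharing coordinates between the two blocks of \refeq{e:augmented-zonotope-defn}, not from freezing $\lambda$ over the horizon; freezing is simply your (valid, disturbance-free) way of realizing that coupling. Finally, your observation that the absolute value must enclose the whole sum $\mat A^t \genmat{\set I}\Gamma + \sum_{s} \mat A^{t-1-s}\mat B\bigphi(s)$ is right, and it is why the paper later characterizes these constraints as convex---not linear---in $\gamma$ and $\bigphi(t)$.
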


\begin{proof}
  Observe that $x(0) \in \set I$ implies $x(t) \in \reachSet{t}{\set
    I}$ by~\refeq{e:reach-defn}.  Define the (not necessarily unique)
  input
  \begin{equation} \label{e:viable-input-no-disturb}
      u(t) = \beta(t) + \bigphi(t) \lambda(x(t), \reachSet{t}{\set I}).
  \end{equation}
  By~\refeq{e:lambda-defn},
  $u(t) \in \langle \beta(t) \mid \bigphi(t) \rangle$.
  By~\refeq{e:control-constraints} and
  Lemma~\ref{t:zonotope-box-containment}, the
  constraint~\refeq{e:viable-input-no-disturb-box-containment} implies $\langle
  \beta(t) \mid \bigphi(t) \rangle \subseteq \set U$.  Therefore,
  $u(t) \in \set U$ for all $t = 0, \ldots, T$.
  By~\refeq{e:state-constraints},~\refeq{e:zonotope-no-disturb-evolution}
  and Lemma~\ref{t:zonotope-box-containment}, the
  constraint~\refeq{e:viable-state-no-disturb-box-containment} implies
  $\reachSet{t}{\set I} \subseteq \set X$ and consequently $x(t) \in
  \set X$.  We have therefore proved for any $x(0) \in \set I$ that
  there exists feasible input signal $u\sig$ such that the trajectory
  $x\sig$ starting from $x(0)$ generated by $u\sig$ satisfies $x(t)
  \in \set X$ for $t = 0, \ldots, T$.  By~\refeq{e:viable-defn}, $\set
  I \subseteq \viabSet{[0,T]}{\set X}$.
\end{proof}

By Proposition~\ref{t:viable-set-containment-no-disturb}, finding a
viable set $\set I$ reduces to finding $\alpha$, $\gamma$ and $\{
\beta(t), \bigphi(t) \}_{t = 0}^{T-1}$ to
satisfy~\refeq{e:viable-state-no-disturb-box-containment}
and~\refeq{e:viable-input-no-disturb-box-containment}.  We can simply
substitute~\refeq{e:viable-state-no-disturb-box-containment}
and~\refeq{e:viable-input-no-disturb-box-containment}
for~\refeq{e:invariant-no-input-box-containment} in the
optimization~\refeq{e:invariant-optimization-no-input} and add $\{
\beta(t), \bigphi(t) \}_{t = 0}^{T-1}$ to the decision variables.  A
feasible solution to the resulting optimization problem will define a
set of viable states through $\alpha$ and $\gamma$, and a (time
dependent) set of viable controls through $\{ \beta(t), \bigphi(t)
\}_{t = 0}^{T-1}$; however, the set of viable controls may be very
small: By~\refeq{e:viable-input-no-disturb} the range of $u(t)$ is
directly proportional to $|\bigphi(t)|$, but reducing $|\bigphi(t)|$
always makes it easier to
satisfy~\refeq{e:viable-input-no-disturb-box-containment} and the
objective from~\refeq{e:invariant-optimization-no-input} provides no
direct incentive to increase $|\bigphi(t)|$ (although a nonzero value
may be necessary to
satisfy~\refeq{e:viable-state-no-disturb-box-containment}).

In order to achieve a larger set of viable controls we would like to
maximize $|\bigphi(t)|$, but maximizing an absolute value is a
non-convex objective and we are not willing to destroy the convexity
of our optimization to directly incorporate such a term.  We also
cannot constrain $\bigphi(t) \geq 0$ elementwise, since the sign of
elements of $\phi_i(t)$ relative to the sign of elements of the
corresponding $\genvec{i}{\set I}$ may be critical to achieving
viability (more discussion in section~\ref{s:viable-example}). Another
approach is needed to seek broader control authority.

\subsection{Control as Scaled Disturbance}
\label{s:viability-control-disturb}

Intuitively, we would like to characterize the range of control input
which could be applied while maintaining the viability of $\set I$.
That control input authority might cause the reach set to be larger
than it would be otherwise, but such growth may be accommodated in
regions where the state constraints are not tight.  Furthermore, we
already have a method of mathematically characterizing the effect of
such a priori indeterminant input: Treat it as a disturbance.

With that in mind, we introduce a parameterized zonotope
$\controlDisturbSet$ to capture this control input authority which is
independent of the authority in the equations above.  Define
\begin{equation} \label{e:free-input-zonotope-defn}
  \begin{aligned}  
    \controlDisturbSet &= \langle \zeros{d_u}
        \mid \psi_1 \genvec{1}{\controlDisturbSet}, 
             \psi_2 \genvec{2}{\controlDisturbSet},
             \ldots, 
             \psi_{\countvec{\controlDisturbSet}}
             \genvec{\countvec{\controlDisturbSet}}{\controlDisturbSet}
        \rangle, \\
      &=\langle \zeros{d_u} \mid \genmat{\controlDisturbSet} \Psi \rangle,
  \end{aligned}
\end{equation}
where $\Psi \in \R^{\countvec{\controlDisturbSet} \times \countvec{\controlDisturbSet}}$ is
the diagonal matrix with vector $\psi$ along its diagonal.  Like $\set
I$, we will fix the generators $\genmat{\controlDisturbSet} \in \R^{d_u \times
  \countvec{\controlDisturbSet}}$ but leave the generator scaling vector $\psi \in
\R^{\countvec{\controlDisturbSet}}$ with $\psi \geq 0$ as a free parameter.  We
then allow for time-dependent $\controlDisturbSet(t)$ and compute the result of
input $u(t) \in \controlDisturbSet(t)$ on the reach set according to
Proposition~\ref{t:reach-set-evolution-no-control}.

\begin{proposition}
  Given $\{ \beta(s), \bigphi(s), \psi(s) \}_{s = 0}^{t-1}$ and
  assuming a system with no disturbance input or drift, the reach set
  for an initial state space zonotope $\set I$ can be represented by a
  center, generator count and generator matrix given by
  \begin{equation} \label{e:zonotope-control-disturb-evolution}
  \begin{aligned}
    \centvec{\reachSet{t}{\set I}} &= \mat A^t \alpha
      + \sum_{s=0}^{t-1} \mat A^{t-1-s} \mat B \beta(s) \\
    \countvec{\reachSet{t}{\set I}} &=
      \countvecI + \sum_{s = 0}^{t-1} \countvec{\controlDisturbSet(s)} \\
    \genmat{\reachSet{t}{\set I}} &=
      \bma \mat F_{\set I} & \mat F_0 & \mat F_1 & \cdots & \mat F_{t-1} \ema
  \end{aligned}
  \end{equation}
  where
  \begin{equation} \label{e:control-disturb-generator-defn}
  \begin{aligned}  
    \mat F_{\set I} &= \mat A^t \genmat{\set I} \Gamma
                     + \sum_{s=0}^{t-1} \mat A^{t-1-s} \mat B \bigphi(s) \\
    \mat F_s &= \mat A^{t-1-s} \mat B \genmat{\controlDisturbSet(s)} \Psi(s)
    \quad \text{for } s = 0, \ldots, t-1
  \end{aligned}
  \end{equation}
\end{proposition}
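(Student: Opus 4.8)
The plan is to mirror the inductive linear-transform argument that establishes the earlier reach-set formula~\refeq{e:zonotope-no-disturb-evolution}, enlarging the augmented state-control zonotope at each step by a block of generators encoding the independent control authority $\controlDisturbSet(s)$ from~\refeq{e:free-input-zonotope-defn}. With $\set V = \emptyset$ and $w = 0$ the dynamics~\refeq{e:linear-dynamics} reduce to the linear map $x(s+1) = \bma \mat A & \mat B \ema \bma x(s) \\ u(s) \ema$, so it suffices to track how a zonotope in $(x,u)$-space transforms under $\bma \mat A & \mat B \ema$ and then invoke closure of zonotopes under linear maps.

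First I would fix the augmented zonotope at time $t$. Writing $x(t) = \centvec{\reachSet{t}{\set I}} + \genmat{\reachSet{t}{\set I}}\lambda$ with $\lambda \in [-1,+1]^{\countvec{\reachSet{t}{\set I}}}$, I take the control to be $u(t) = \beta(t) + \bigphi(t)\lambda_{\set I} + \genmat{\controlDisturbSet(t)}\Psi(t)\mu$, where $\lambda_{\set I}$ is the subvector of $\lambda$ multiplying the $\mat F_{\set I}$ block (the generators inherited from $\set I$) and $\mu \in [-1,+1]^{\countvec{\controlDisturbSet(t)}}$ is a fresh coefficient vector independent of $\lambda$. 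The structural point is that the free generators $\genmat{\controlDisturbSet(t)}\Psi(t)$ enter $u(t)$ with their own coefficients, decoupled from the state, so in the augmented generator matrix they occupy a block whose state-component is zero, reflecting that this free input does not constrain $x(t)$.

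Second I would apply $\bma \mat A & \mat B \ema$ to this augmented zonotope and read off the recursion. By the inductive hypothesis the block tied to the $\set I$-generators becomes $\mat A\,\mat F_{\set I} + \mat B\,\bigphi(t)$, which is $\mat F_{\set I}$ advanced one time step; each accumulated free block $\mat F_s$ carries no feedback term and so maps to $\mat A\,\mat F_s$, again the formula advanced one step; and the newly introduced generators $\mat B\,\genmat{\controlDisturbSet(t)}\Psi(t)$ supply the new block $\mat F_t$. The center gains $\mat B\,\beta(t)$ and is unmoved by the free input because $\controlDisturbSet(t)$ is centered at $\zeros{d_u}$, while the generator count grows by $\countvec{\controlDisturbSet(t)}$. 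Taking the base case $t = 1$ to be $\bma \mat A & \mat B \ema$ applied to~\refeq{e:augmented-zonotope-defn} after appending the free block for $\controlDisturbSet(0)$, induction then yields~\refeq{e:zonotope-control-disturb-evolution} and~\refeq{e:control-disturb-generator-defn}.

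The step I expect to require the most care is the bookkeeping of this block structure rather than any conceptual difficulty: one must keep $\bigphi(t)$ acting only on the $\set I$-generators while giving every free block its own independent coefficient, since conflating the two would miscount generators or spuriously couple the free inputs to the state. A cleaner alternative sidesteps the augmentation by superposition: linearity of the solution~\refeq{e:linear-soln} expresses $\reachSet{t}{\set I}$ as the Minkowski sum of the reach set under the $\beta, \bigphi$ feedback alone (given by~\refeq{e:zonotope-no-disturb-evolution}) and the contribution of the free inputs ranging over $\controlDisturbSet(s)$ treated as disturbances entering through $\mat B$. The latter is precisely Proposition~\ref{t:reach-set-evolution-no-control} with $\mat C$ replaced by $\mat B$ and $\set V$ by $\controlDisturbSet(s)$; since the Minkowski sum of zonotopes concatenates generator matrices and adds centers, the two contributions assemble into the stated block matrix $\bma \mat F_{\set I} & \mat F_0 & \cdots & \mat F_{t-1} \ema$ with no net shift of the center.
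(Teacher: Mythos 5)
Your proposal is correct, and in fact it contains two proofs: your primary inductive argument, and---in your final paragraph---the paper's own proof. The paper disposes of this proposition in one line, ``combine~\refeq{e:reach-set-zonotope-no-control} and~\refeq{e:zonotope-no-disturb-evolution} by superposition,'' which is precisely your ``cleaner alternative'': the free inputs $u(s) \in \controlDisturbSet(s)$ are treated as disturbances entering through $\mat B$ (Proposition~\ref{t:reach-set-evolution-no-control} with $\mat C$ replaced by $\mat B$ and $\set V$ by $\controlDisturbSet(s)$), and their contribution is Minkowski-summed with the feedback-only reach set~\refeq{e:zonotope-no-disturb-evolution}; since each $\controlDisturbSet(s)$ is centered at $\zeros{d_u}$, the center formula picks up no extra terms, as you note. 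Your primary route---induction on the augmented $(x,u)$-zonotope with the three-block structure (the $\set I$-generators carrying $\bigphi(t)$, the accumulated free blocks carrying zero control component, and the fresh block $\mat B \genmat{\controlDisturbSet(t)} \Psi(t)$ with zero state component)---is the explicit calculation that the paper's one word ``superposition'' compresses. It is longer but buys something concrete: it verifies the recursions $\mat F_{\set I} \mapsto \mat A \mat F_{\set I} + \mat B \bigphi(t)$ and $\mat F_s \mapsto \mat A \mat F_s$ directly, and it makes visible exactly why the accumulated free generators never acquire feedback terms---the same zero-column bookkeeping the paper must introduce later via $\tilde{\bigphi}(t)$ in the proof of Proposition~\ref{t:viable-set-containment-control-disturb}. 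One small point worth flagging in the superposition route (both yours and the paper's): Proposition~\ref{t:reach-set-evolution-no-control} is stated for a time-invariant disturbance zonotope $\set V$, so invoking it with the time-varying sets $\controlDisturbSet(s)$ is a mild, harmless generalization; your induction handles the time dependence natively.
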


\begin{proof}
  Combine~\refeq{e:reach-set-zonotope-no-control}
  and~\refeq{e:zonotope-no-disturb-evolution} by superposition.
\end{proof}

We note in passing that the equation for $\centvec{\reachSet{t}{\set
    I}}$ in~\refeq{e:zonotope-control-disturb-evolution} demonstrates
why it is sufficient to fix the center of $\controlDisturbSet(t)$ at
the origin in~\refeq{e:free-input-zonotope-defn}: The parameter
$\beta(t)$ already provides a mechanism to shift the center of the
input set at time $t$.

\begin{proposition} \label{t:viable-set-containment-control-disturb}
  Given $\{ \beta(t), \bigphi(t), \psi(t) \}_{t = 0}^{T-1}$ and
  assuming a system with no disturbance input or drift, $\set I
  \subseteq \viabSet{[0,T]}{\set X}$ if
  \begin{equation} \label{e:viable-state-control-disturb-box-containment}
  \begin{aligned}
    \left( \begin{gathered}
      \mat A^t \alpha + \sum_{s=0}^{t-1} \mat A^{t-1-s} \mat B \beta(s) \\      
        - \left| \mat A^t \genmat{\set I} \Gamma
                 + \sum_{s=0}^{t-1} \mat A^{t-1-s} \mat B \bigphi(s)\right|
          \ones{\countvecI} \\
        - \sum_{s=0}^{t-1} \left(
               \left| \mat A^{t-1-s} \mat B \genmat{\controlDisturbSet(s)} \right|
                           \psi(s) \right)
     \end{gathered} \right)
    &\geq \underline x, \\
    \left( \begin{gathered}
      \mat A^t \alpha + \sum_{s=0}^{t-1} \mat A^{t-1-s} \mat B \beta(s) \\      
        + \left| \mat A^t \genmat{\set I} \Gamma
                 + \sum_{s=0}^{t-1} \mat A^{t-1-s} \mat B \bigphi(s)\right|
          \ones{\countvecI} \\
        + \sum_{s=0}^{t-1} \left(
               \left| \mat A^{t-1-s} \mat B \genmat{\controlDisturbSet(s)} \right|
                           \psi(s) \right)
     \end{gathered} \right)
     &\leq \overline x
  \end{aligned}
  \end{equation}
  and
  \begin{equation} \label{e:viable-input-control-disturb-box-containment}
  \begin{aligned}
    \beta(t) - \left| \bigphi(t) \right| \ones{\countvecI}
      - \left| \genmat{\controlDisturbSet(t)} \right| \psi(t)
      & \geq \underline u, \\
    \beta(t) + \left| \bigphi(t) \right| \ones{\countvecI}
      + \left| \genmat{\controlDisturbSet(t)} \right| \psi(t)
      & \leq \overline u
  \end{aligned}
  \end{equation}
  for $t = 0, \ldots, T$.
\end{proposition}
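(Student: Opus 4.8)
The plan is to mirror the proof of Proposition~\ref{t:viable-set-containment-no-disturb}, augmenting the feedback control with a contribution drawn from the scaled control-disturbance zonotope $\controlDisturbSet(t)$ of~\refeq{e:free-input-zonotope-defn}, and then verifying the two constraint families separately. For an arbitrary $x(0) \in \set I$ I would first fix a decomposition $\lambda_{\set I} = \lambda(x(0), \set I)$, which exists by~\refeq{e:lambda-defn} and satisfies $-\ones{\countvecI} \leq \lambda_{\set I} \leq \ones{\countvecI}$, so that $x(0) = \alpha + \genmat{\set I}\Gamma\lambda_{\set I}$. I would then exhibit an explicit control signal, show it lands in $\set U$, and show that the trajectory it induces remains in $\set X$; existence of a single such signal is all that~\refeq{e:viable-defn} demands.

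For the control I would set, at each time $t$,
\[
  u(t) = \beta(t) + \bigphi(t)\lambda_{\set I} + \genmat{\controlDisturbSet(t)}\nu(t),
\]
where $\nu(t)$ is \emph{any} value with $-\psi(t) \leq \nu(t) \leq \psi(t)$; by~\refeq{e:free-input-zonotope-defn} the last term ranges over all of $\controlDisturbSet(t)$. Since $\lambda_{\set I}$ and $\nu(t)$ lie in their respective boxes, the elementwise bound behind~\refeq{e:zonotope-box-constraints} gives
\[
  \beta(t) - |\bigphi(t)|\ones{\countvecI} - |\genmat{\controlDisturbSet(t)}|\psi(t)
    \leq u(t) \leq
  \beta(t) + |\bigphi(t)|\ones{\countvecI} + |\genmat{\controlDisturbSet(t)}|\psi(t),
\]
so the input constraint~\refeq{e:viable-input-control-disturb-box-containment} together with~\refeq{e:control-constraints} forces $u(t) \in \set U$ for every $t = 0, \ldots, T$.

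For the state constraint I would substitute this control into the solution formula~\refeq{e:linear-soln} (with $\set V = \emptyset$ and $w = 0$) and collect terms by superposition, exactly as in the proposition establishing~\refeq{e:zonotope-control-disturb-evolution}. The $\alpha$ and $\beta(s)$ contributions assemble into $\centvec{\reachSet{t}{\set I}}$, the $\lambda_{\set I}$-dependent contributions assemble into $\mat F_{\set I}\lambda_{\set I}$, and each free contribution $\mat A^{t-1-s}\mat B\genmat{\controlDisturbSet(s)}\nu(s)$ equals $\mat F_s\lambda_s$ for some $\lambda_s \in [-\ones{}, \ones{}]$, with $\mat F_{\set I}$ and $\mat F_s$ as in~\refeq{e:control-disturb-generator-defn}. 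Hence $x(t) \in \reachSet{t}{\set I}$ for the reach set of~\refeq{e:zonotope-control-disturb-evolution}, \emph{regardless} of the chosen $\nu(s)$. Applying Lemma~\ref{t:zonotope-box-containment} to that reach set shows that~\refeq{e:viable-state-control-disturb-box-containment} implies $\reachSet{t}{\set I} \subseteq \set X$, so $x(t) \in \set X$ for $t = 0, \ldots, T$, and~\refeq{e:viable-defn} then yields $\set I \subseteq \viabSet{[0,T]}{\set X}$.

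I expect the main subtlety to be the dual role played by the authority $\controlDisturbSet(t)$: it appears once as a genuine control we apply (bounded in the input constraint) and once as an uncertainty enlarging the reach set (the generators $\mat F_0, \ldots, \mat F_{t-1}$). Confirming consistency between these two appearances is what the superposition step accomplishes, and the scaling bookkeeping between $\Psi(s)$ inside $\mat F_s$ and $\psi(s)$ in the input bound must be tracked carefully so that $\nu(s) = \Psi(s)\lambda_s$ maps cleanly onto a fresh unit-box generator coordinate. Tying the $\bigphi(t)$ feedback to the fixed initial decomposition $\lambda_{\set I}$, rather than to a current-state decomposition $\lambda(x(t), \reachSet{t}{\set I})$ as in Proposition~\ref{t:viable-set-containment-no-disturb}, is the cleanest way to sidestep the non-uniqueness introduced by these extra generators and to make the reach-set identity close.
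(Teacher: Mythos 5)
Your proposal is correct, and it verifies the same two constraint families with the same key tools (Lemma~\ref{t:zonotope-box-containment} together with the reach-set parameterization \refeq{e:zonotope-control-disturb-evolution}--\refeq{e:control-disturb-generator-defn}), but it anchors the control differently from the paper, and the difference is worth noting. The paper's proof defines a \emph{state-feedback} input $u(t) = \beta(t) + \tilde{\bigphi}(t)\,\lambda(x(t), \reachSet{t}{\set I}) + \genmat{\controlDisturbSet(t)}\Psi(t)\rho(t)$, where $\tilde{\bigphi}(t)$ is $\bigphi(t)$ padded with zero columns to match the extra generators of $\reachSet{t}{\set I}$ contributed by $\controlDisturbSet(s)$ for $s < t$, and admissibility follows from containment in $\set U$ of the auxiliary input zonotope $\tilde{\set U}(t)$ whose generators are the columns of $\bigphi(t)$ and of $\genmat{\controlDisturbSet(t)}\Psi(t)$. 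You instead fix one decomposition $\lambda_{\set I}$ of $x(0)$ and use the \emph{open-loop} signal $u(t) = \beta(t) + \bigphi(t)\lambda_{\set I} + \genmat{\controlDisturbSet(t)}\nu(t)$, pushed through the explicit solution formula \refeq{e:linear-soln}. Because \refeq{e:viable-defn} only requires existence of some admissible signal for each initial state, and there is no disturbance to react to, this is fully sufficient; it is also arguably cleaner, since it sidesteps both the non-uniqueness of $\lambda(x(t), \reachSet{t}{\set I})$ and the implicit induction the paper's argument relies on (the zonotope \refeq{e:zonotope-control-disturb-evolution} is the reach set of the very control scheme being constructed, so strictly one must check that the feedback keeps the trajectory inside the successive reach sets). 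What the feedback form buys the paper is, first, the set-valued viable feedback control highlighted as a contribution, computable from the current state at run time rather than remembered from $t = 0$, and, second, the exact structure that is reused by superposition in the proof of Proposition~\ref{t:discriminating-set-containment}. Your closing bookkeeping remarks --- padding $\bigphi(t)$ versus mapping onto fresh unit-box coordinates, and $\nu(s) = \Psi(s)\lambda_s$ with the components of $\lambda_s$ arbitrary wherever $\psi_i(s) = 0$ --- are exactly the right details to track.
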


We have shifted $\Psi(t)$ outside the absolute value
in~\refeq{e:viable-state-control-disturb-box-containment}
and~\refeq{e:viable-input-control-disturb-box-containment} because
$\psi(t) \geq 0$, and then taken advantage of the fact that $\Psi(t)
\ones{\countvec{\controlDisturbSet(t)}} = \psi(t)$.

\begin{proof}
  Let
  \[
    \begin{aligned}
      \tilde{\set U}(t) &= \langle \beta(t) \mid
                         \bma \bigphi(t)
                              & \genmat{\controlDisturbSet(t)} \Psi(t) \ema
                         \rangle, \\
      \tilde{\bigphi}(t) &=
        \bma \bigphi(t)
          & \zeros{d_u \times
                   \sum_{s=0}^{t-1} \countvec{\set \controlDisturbSet(s)}}
        \ema,
    \end{aligned}
  \]
  and then choose any $\rho(t) \in \R^{\countvec{\controlDisturbSet(t)}}$ such that
  \begin{equation} \label{e:rho-defn}
    {-\ones{\countvec{\controlDisturbSet(t)}}} \leq \rho(t) \leq
    +\ones{\countvec{\controlDisturbSet(t)}}.
  \end{equation}
  Define the input
  \begin{equation} \label{e:viable-input-control-disturb}
    u(t) = \beta(t) + \tilde{\bigphi}(t) \lambda(x(t), \reachSet{t}{\set I})
           + \genmat{\controlDisturbSet(t)} \Psi(t) \rho(t).
  \end{equation}
  We augment $\tilde{\bigphi}(t)$ with zero columns / generators to
  account for the extra
  generators~\refeq{e:zonotope-control-disturb-evolution} in
  $\reachSet{t}{\set I}$ arising from $\controlDisturbSet(s)$ for $s <
  t$.  Note that these extra generators have no direct effect on the
  choice of input for step $t$
  in~\refeq{e:viable-input-control-disturb} because they are zero
  vectors, although we do need to account for their continuing effect
  on the choice of $\lambda(x(t), \reachSet{t}{\set I})$ through these
  extra columns in $\tilde{\bigphi}(t)$.
  
  By~\refeq{e:lambda-defn}, the fact that the extra generators in
  $\tilde{\bigphi}(t)$ compared to $\bigphi(t)$ are all zero vectors,
  and~\refeq{e:rho-defn}, $u(t) \in \tilde{\set U}(t)$.
  By~\refeq{e:control-constraints} and
  Lemma~\ref{t:zonotope-box-containment}, the
  constraint~\refeq{e:viable-input-control-disturb-box-containment}
  implies $\tilde{\set U}(t) \subseteq \set U$.  Therefore, $u(t) \in
  \set U$ for all $t = 0, \ldots, T$.
  By~\refeq{e:state-constraints},~\refeq{e:zonotope-control-disturb-evolution}
  and Lemma~\ref{t:zonotope-box-containment}, the
  constraint~\refeq{e:viable-state-control-disturb-box-containment}
  implies $\reachSet{t}{\set I} \subseteq \set X$ and consequently
  $x(t) \in \set X$.  The remainder is the same as the proof of
  Proposition~\ref{t:viable-set-containment-no-disturb}.
\end{proof}

\begin{remark}
  The
  constraints~\refeq{e:viable-state-control-disturb-box-containment}
  and~\refeq{e:viable-input-control-disturb-box-containment} are
  linear in $\alpha$ and $\{ \beta(t), \psi(t) \}_{t = 0}^{T-1}$, and
  are convex in $\gamma$ and $\{ \bigphi(t) \}_{t=0}^{T-1}$.
\end{remark}

\ignore{
\begin{proof}
Linearity with respect to $\alpha$ and $\{ \beta(t), \psi(t) \}_{t =
  0}^{T-1}$ is obvious.  For convexity, both $\gamma$ and $\bigphi(t)$
appear linearly within the absolute values, so it suffices to show for
$a, b, \underline c, \overline c \in \R$ that
\[
  \begin{aligned}
    -| a + b | &\geq \underline c \\
    |a + b | &\leq \overline c
  \end{aligned}
\]
defines a convex set in parameters $a, b$ for fixed $\underline c \leq
\overline c$.

\todoIM{Complete convexity proof.}
\end{proof}
}

\ignore{
It is to address this challenge that we use the extra generator $\bma
\zeros{d_x} & \rho_{\countvecI + 1} \ema^T$ introduced
in~\refeq{e:augmented-zonotope-defn}.  Because the state space
component of this generator is zero, the zonotope $\set J(\set I,
\beta, \bigphi)$ will be the same whether we use $\rho_{\countvecI +
  1}$ or $-\rho_{\countvecI + 1}$ in this generator; consequently, we
can impose the constraint $\rho_{\countvecI + 1} \geq 0$ without loss
of generality.  By~\refeq{e:lambda-defn} and~\refeq{e:viable-input-no-disturb},
we can change $u(t)$ by $\pm \rho_{\countvecI + 1}(t)$ for any $x \in
\reachSet{t}{\set I}$.  As a heuristic to increase the size of the
viable control set, we therefore add $\ones{d_u}^T \rho_{\countvecI +
  1}$ to the objective function to encourage larger elements in
$\rho_{\countvecI + 1}$.

\todoIM{Fix dimensions: There is a separate $\rho_{\countvecI + 1}$
  for each time, so we need to concatenate them into a matrix and then
  sum over time.  Not sure whether to sum over input dimension (see
  next comment).}

\todoIM{We need a separate extra generator for each input (so more
  than one extra if $d_u > 1$).  Even that won't allow for inputs to
  be coupled unless we allow multiple nonzero values in these extra
  vectors.  If we do that, the positivity constraint requires the
  coupling to be positive, which we probably don't want to force.}
}

Our viability optimization problem is then written as
\begin{equation} \label{e:viable-optimization-control-disturb}
  \begin{aligned}
    \max_{\alpha, \gamma, \{ \beta(t), \bigphi(t), \psi(t) \}_{t = 0}^{T-1} } \;
      & \ones{\countvec{\set I}}^T \gamma
        + \eta \sum_{t=0}^{T-1} \ones{\countvec{\controlDisturbSet(t)}}^T \psi(t) \\
    \text{such that } & \gamma \geq 0 \\
                      & \psi(t) \geq 0 \\   
  \text{and } & \text{\refeq{e:viable-state-control-disturb-box-containment},~\refeq{e:viable-input-control-disturb-box-containment} hold }
                \forall t \in 0, \ldots, T
  \end{aligned}
\end{equation}
where $\eta > 0$ is a weighting parameter used to trade off the
relative importance of large $\gamma$ (to encourage a larger set of
viable states) against large $\psi(t)$ (to encourage a larger set of
viable controls).

\begin{remark}
  The optimization~\refeq{e:viable-optimization-control-disturb} is a
  convex program with $d_x + \countvecI + T d_u(\countvecI + 1) +
  \sum_t \countvec{\controlDisturbSet(t)}$ decision variables, $\countvecI +
  \sum_t \countvec{\controlDisturbSet(t)}$ non-negativity constraints, and $2 (d_x
  + d_u) (T+1)$ general convex constraints
  from~\refeq{e:viable-state-control-disturb-box-containment}
  and~\refeq{e:viable-input-control-disturb-box-containment}.
\end{remark}

\subsection{Example: Double Integrator} \label{s:viable-example}

\begin{figure}
  \includegraphics[width=0.22\textwidth]{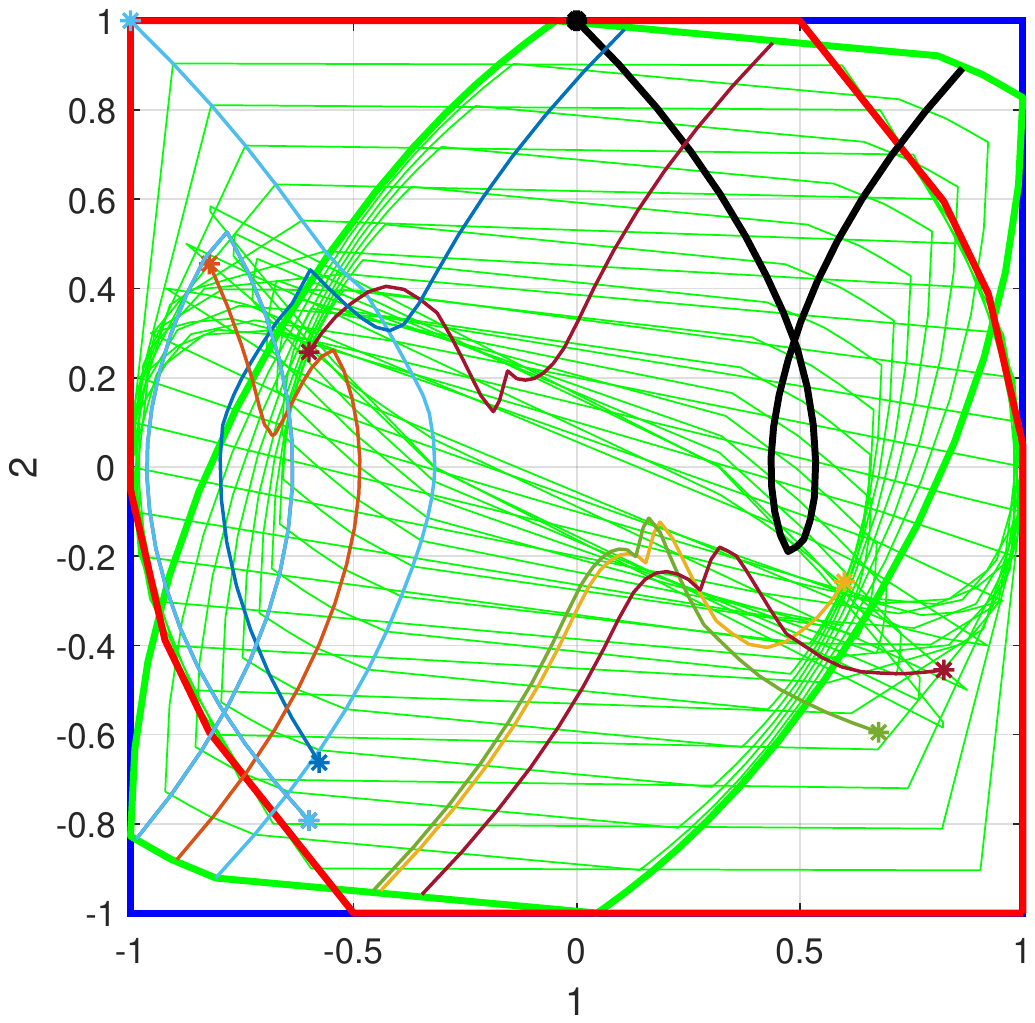}
  \includegraphics[width=0.22\textwidth]{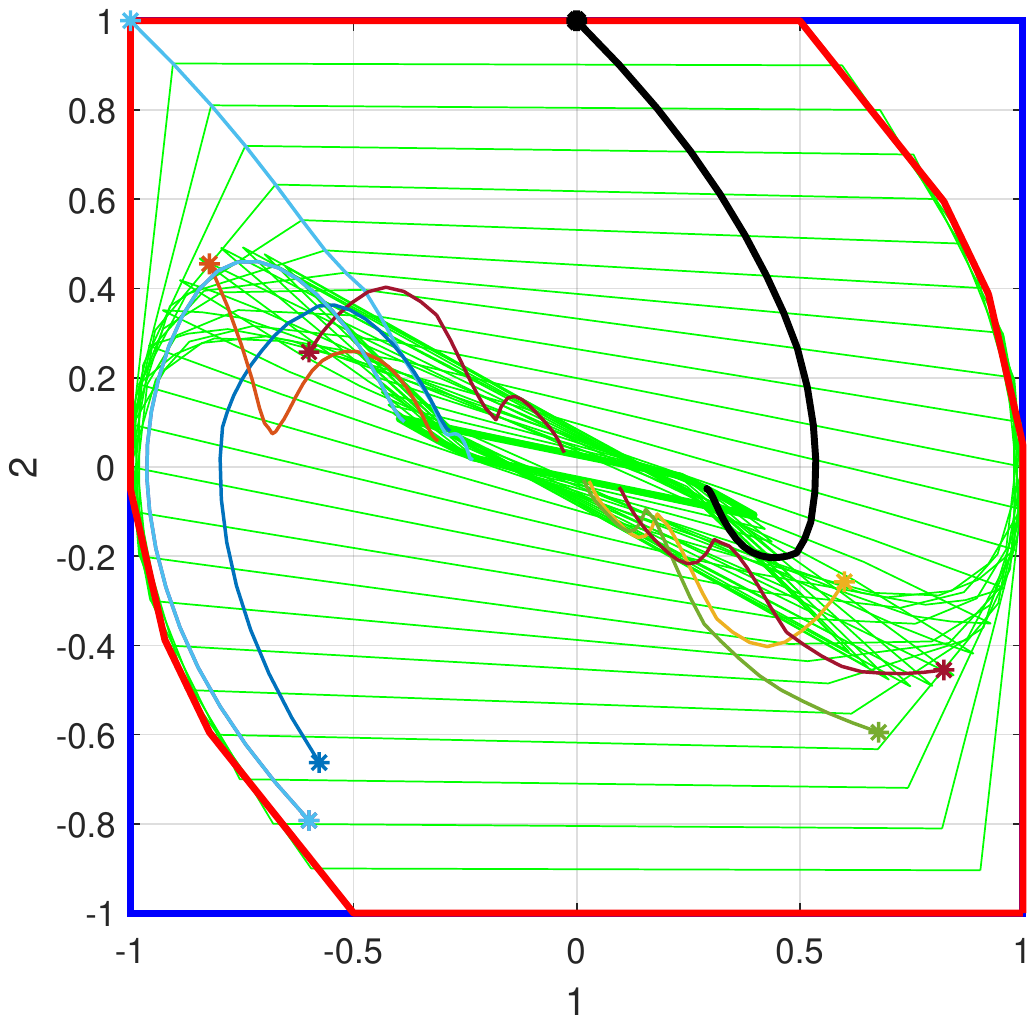}
  \caption{Computed viable set $\set I$ (red thick line) for the
    double integrator example with (left) and without (right)
    $\controlDisturbSet$.  Eight generators equally spaced in the
    north-west quadrant are used (only five have scaling $\gamma_i >
    0.01$).  Also shown are the constraint set $\set X$ (blue thick
    line), $\reachSet{\set I}{t}$ (thin green lines) for $t = 1,
    \ldots T$, and a collection of sample trajectories (star shows the
    initial conditions for each).}
  \label{f:viable-example}
\end{figure}

\begin{figure}
  \includegraphics[width=0.22\textwidth]{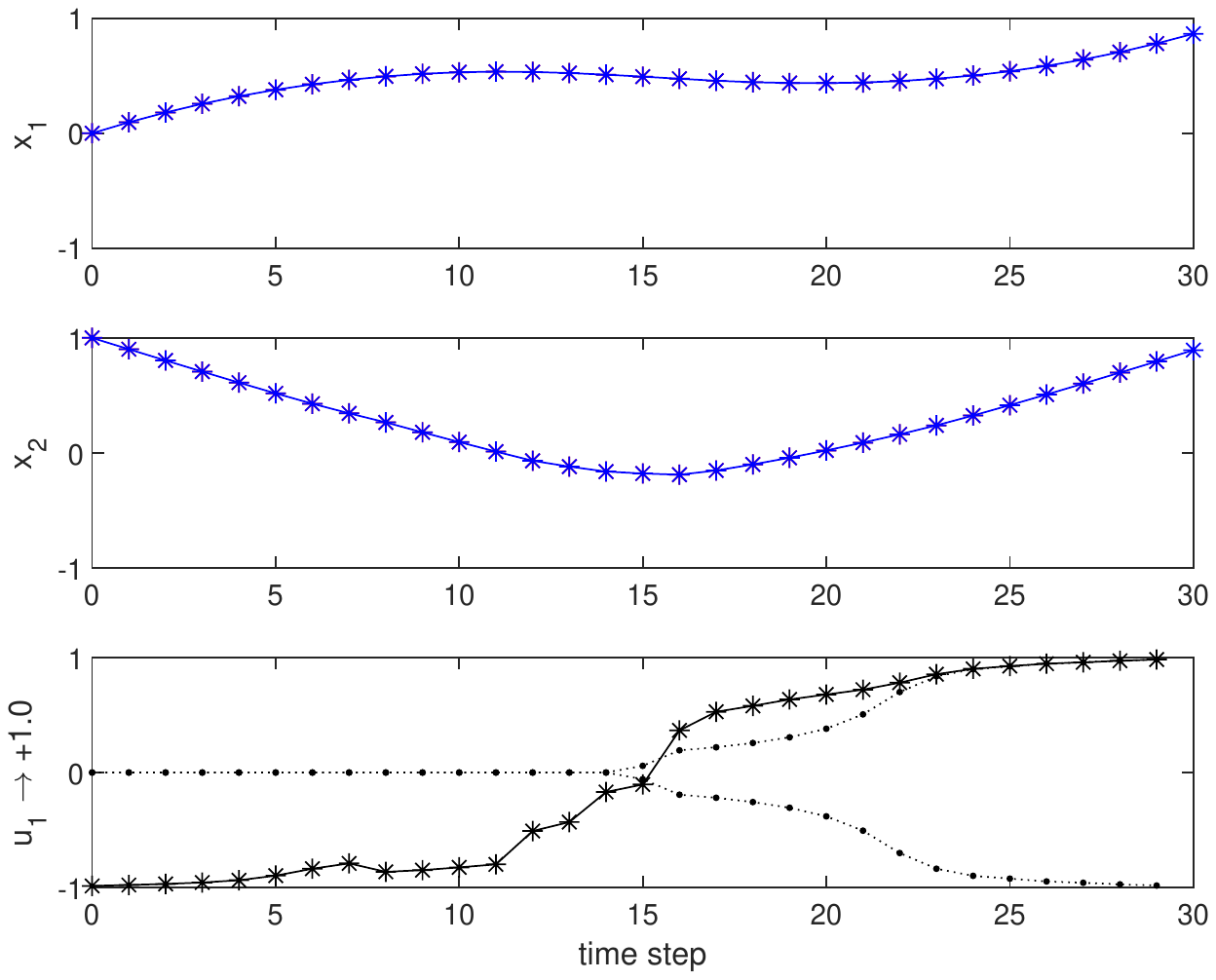}
  \includegraphics[width=0.22\textwidth]{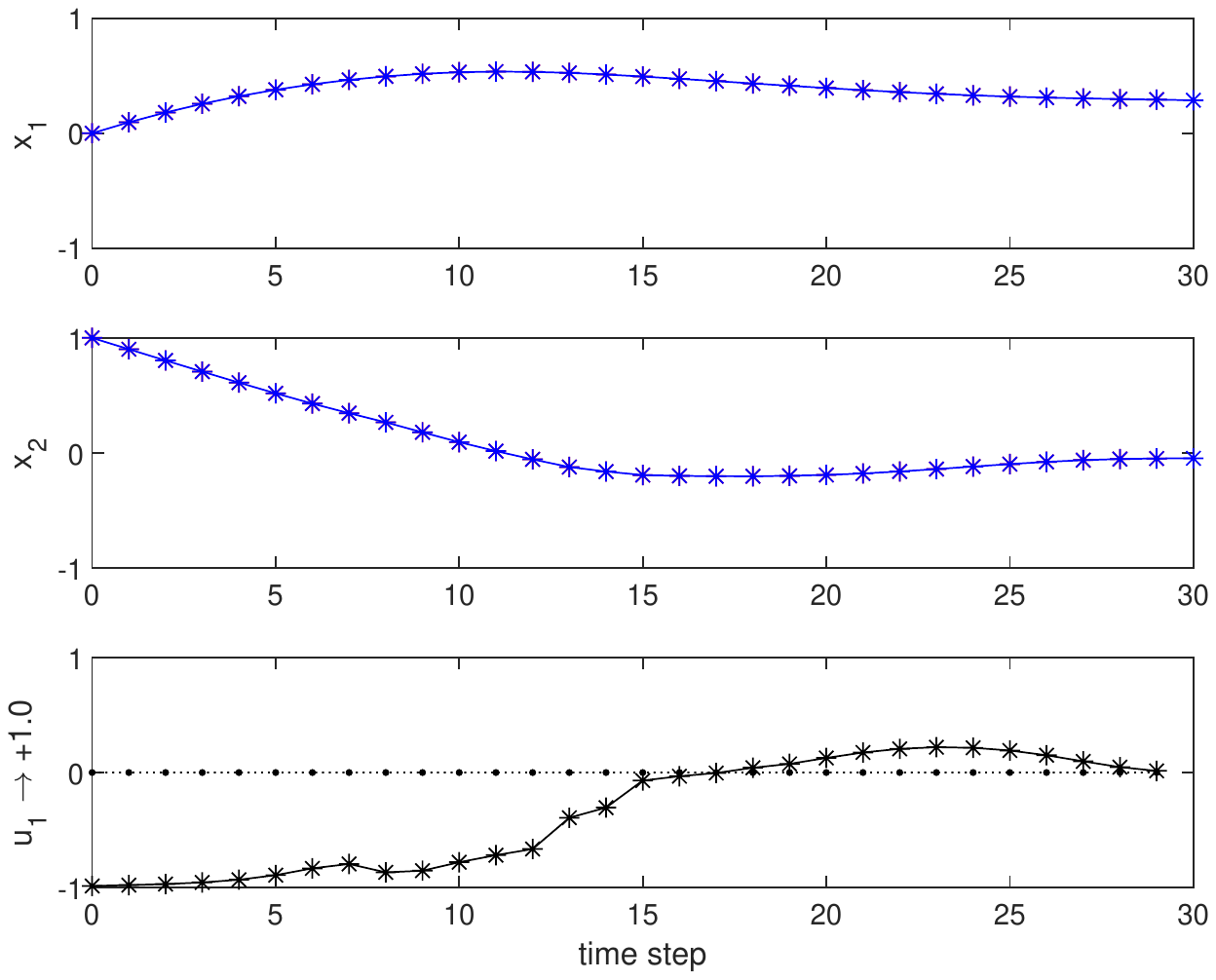}
  \caption{Sample viable trajectory (shown as the thick black
    trajectory in figure~\ref{f:viable-example}) with (left) and without
    (right) $\controlDisturbSet$.  Control at each time is chosen as
    close to $+1$ as possible subject
    to~\refeq{e:viable-input-control-disturb} on the left
    and~\refeq{e:viable-input-no-disturb} on the right.  The dotted
    curve in the control subplot shows the scaling $\psi_1(t)$ (which
    is always zero on the right).  Any difference between the actual
    control $u_1(t)$ and scaling $\psi_1(t)$ arises from $\Phi(t)$.}
  \label{f:viable-example-trajectory}
\end{figure}

To demonstrate the viability algorithm we consider the traditional
double integrator.  Starting from the continuous time system
\[
    \dot x = \bma 0 & +1 \\ 0 & 0 \ema x + \bma 0 \\ 1 \ema u
\]
we use the matrix exponential with time step $0.1$ and \matlab's
\com{integral()} adaptive quadrature routine to construct the discrete
time system
\[
  x(t+1) = \bma +1.0000 & +0.1000 \\ 0 & +1.0000 \ema x(t) +
           \bma +0.0050 \\ +0.1000 \ema u(t).
\]
We use the optimization~\refeq{e:viable-optimization-control-disturb}
with $d_x = 2$, $d_u = 1$, $\underline x = {-\ones{d_x}}$, $\overline
x = +\ones{d_x}$, $\underline u = -1$, $\overline u = +1$, and $T =
30$ to compute a viable set, and in the process demonstrate the
importance of including characterizations of control authority from
both sections~\ref{s:viability-augment-generators}
and~\ref{s:viability-control-disturb}.  Figure~\ref{f:viable-example}
shows the computed viable set with and without the additional control
authority enabled by the technique from
section~\ref{s:viability-control-disturb} (with
$\controlDisturbSet(t)$ equal to the $2 \times 2$ identity matrix for
all $t$).  Run time for the optimization was below 5~seconds for both
cases.

No viable set is found if we omit $\bigphi(t)$ from
section~\ref{s:viability-augment-generators}, so that case is not
shown.  Some intuition for the failure of this latter case to find any
viable set can be found by examining $\phi_1(t)$ for the other cases.
This input component corresponds to the state generator
$\genvec{1}{\set I} = \bma 0 & 1 \ema^T$, which is the generator whose
scaling to a large extent determines the height of the viable set.
For the first half of the time horizon, $\phi_1(t) < 0$; in other
words, states that have a large positive velocity $x_2(t)$ will be
forced (through $\lambda(x(t), \reachSet{t}{\set I})$
in~\refeq{e:viable-input-no-disturb}
or~\refeq{e:viable-input-control-disturb}) to choose a correspondingly
large negative control input $u(t)$.  Without the coupling of state
and input signal through $\bigphi(t)$, viability is infeasible.
Figure~\ref{f:viable-example-trajectory} shows a single trajectory's
states and control input componentwise over time and displays just
this behavior: Although the desired input is $+1$, the early input
signal $u(t)$ is forced to be negative because $x_2(t)$ is quite
positive.

\begin{figure}
  \includegraphics[width=0.22\textwidth]{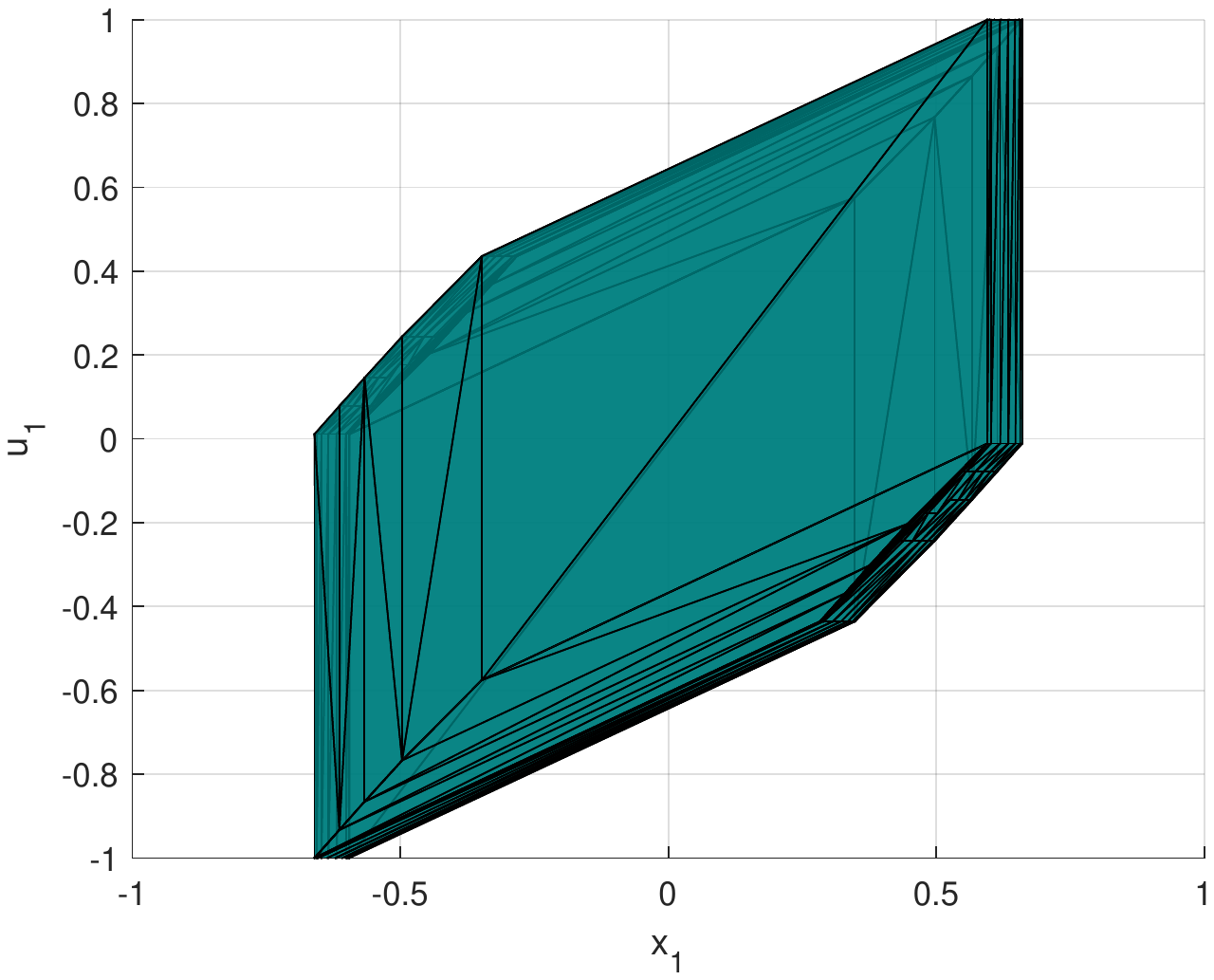}
  \includegraphics[width=0.22\textwidth]{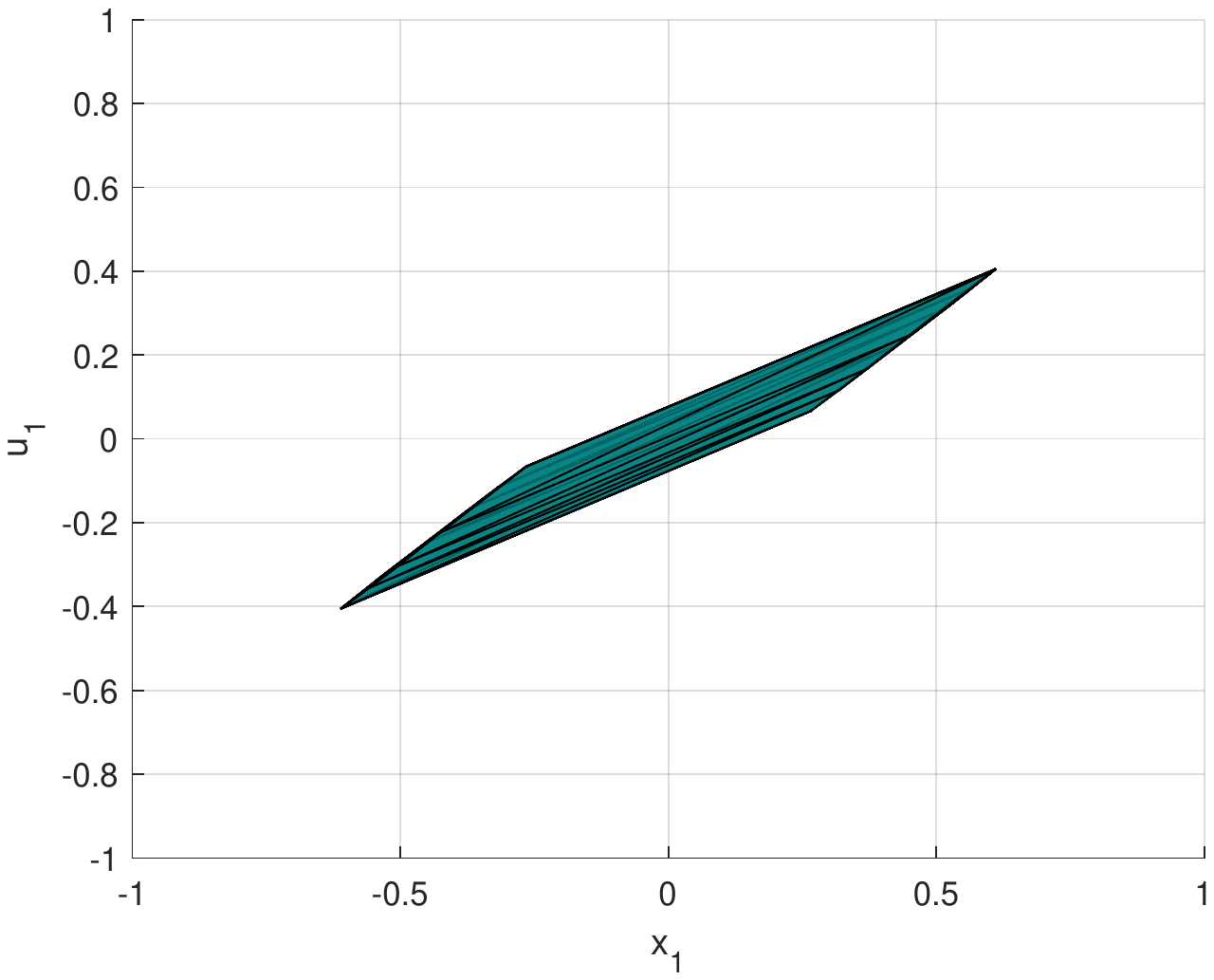}
  \caption{Projection of the set of viable inputs at $t = 20$ with
    (left) and without (right) $\controlDisturbSet$.}
  \label{f:viable-example-controls}
\end{figure}

At later times $\phi_1(t)$ does become slightly positive, but it is at
these times toward the end of the viability horizon that the benefits
of section~\ref{s:viability-control-disturb} are visible.  Although
the sets of viable states in figure~\ref{f:viable-example} are the
same with or without $\controlDisturbSet$, the set of viable controls
is notably larger if we include a non-empty $\controlDisturbSet$; for
example, figure~\ref{f:viable-example-controls} shows that the set of
viable controls at $t = 20$ is much larger on the left, while on the
left of figure~\ref{f:viable-example-trajectory} $u_1(t)$ is much more
able to approach the target value of $+1$ in the latter half of the
horizon because of the growth of $\psi_1(t)$.

\section{Computing Discriminating Sets} \label{s:discriminating}

In this section we incorporate the control, disturbance and drift
terms by combining the derivations in
Sections~\ref{s:with-disturbance}
and~\ref{s:viability-control-disturb} to handle the full generality of
the dynamics~\refeq{e:linear-dynamics}.  The longer equations
referenced in this section are collected in
figure~\ref{f:long-equations}.

\begin{figure*}
  \centering
  \begin{minipage}{\textwidth}
  \begin{equation} \label{e:zonotope-full-dynamics-evolution}
    \left.    
    \begin{aligned}
    \centvec{\reachSet{t}{\set I}} &= \mat A^t \alpha
      + \sum_{s=0}^{t-1} \mat A^{t-1-s}
        \left( \mat B \beta(s) + \mat C \centvec{\set V} + w \right),\\
    \countvec{\reachSet{t}{\set I}} &= \countvecI
      + \sum_{s = 0}^{t-1} \countvec{\controlDisturbSet(s)}
      + t \countvec{\set V} \\
    \genmat{\reachSet{t}{\set I}} &=
      \bma \mat F_{\set I} & \mat F_0 & \mat F_1 & \cdots & \mat F_{t-1}
             & \mat A^{t-1} \mat C \genmat{\set V}
             & \mat A^{t-2} \mat C \genmat{\set V}
             & \cdots
             & \mat C \genmat{\set V}
      \ema
    \end{aligned}
    \right\}
    \begin{gathered}
      \text{where } \mat F_{\set I} \text{ and } \{ \mat F_s \}_{s=0}^{t-1} \\
      \text{are given in~\refeq{e:control-disturb-generator-defn}}.
    \end{gathered}
  \end{equation}

  \begin{equation} \label{e:discriminating-state-box-containment}
    \left.
    \begin{aligned}
      \left( \begin{gathered}  
      \mat A^t \alpha + \sum_{s=0}^{t-1} \mat A^{t-1-s}
        \left( \mat B \beta(s) + \mat C \centvec{\set V} + w \right)
        - \left| \mat A^t \genmat{\set I} \Gamma
            + \sum_{s=0}^{t-1} \mat A^{t-1-s} \mat B \bigphi(s)\right|
          \ones{\countvecI} \\
        - \sum_{s=0}^{t-1} \left(
            \left| \mat A^{t-1-s} \mat B \genmat{\controlDisturbSet(s)} \right|
                           \psi(s) \right)
        - \sum_{s=0}^{t-1} \left(
            \left| \mat A^{t-1-s} \mat C \genmat{\set V} \right|
            \ones{\countvec{\set V}} \right)
      \end{gathered} \right)
    &\geq \underline x, \\
      \left( \begin{gathered}  
      \mat A^t \alpha + \sum_{s=0}^{t-1} \mat A^{t-1-s}
        \left( \mat B \beta(s) + \mat C \centvec{\set V} + w \right)
        + \left| \mat A^t \genmat{\set I} \Gamma
                 + \sum_{s=0}^{t-1} \mat A^{t-1-s} \mat B \bigphi(s)\right|
          \ones{\countvecI} \\
        + \sum_{s=0}^{t-1} \left(
               \left| \mat A^{t-1-s} \mat B \genmat{\controlDisturbSet(s)} \right|
                           \psi(s) \right)
        + \sum_{s=0}^{t-1} \left(
            \left| \mat A^{t-1-s} \mat C \genmat{\set V} \right|
            \ones{\countvec{\set V}} \right)
      \end{gathered} \right)
     &\leq \overline x
    \end{aligned}
    \right\}
    \text{for } t = 0, \ldots, T.
  \end{equation}
  \end{minipage}
  \caption{Some long equations for
    Propositions~\ref{t:zonotope-full-dynamics-evolution}
    and~\ref{t:discriminating-set-containment}.}
  \label{f:long-equations}
\end{figure*}

\begin{proposition} \label{t:zonotope-full-dynamics-evolution}
  Given $\{ \beta(s), \bigphi(s), \psi(s) \}_{s = 0}^{t-1}$, the reach
  set for an initial state space zonotope $\set I$ can be represented
  by a center, generator count and generator matrix given
  by~\refeq{e:zonotope-full-dynamics-evolution}.
\end{proposition}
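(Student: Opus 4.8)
The plan is to invoke superposition for the affine dynamics~\refeq{e:linear-dynamics}, exactly as was done to establish~\refeq{e:zonotope-control-disturb-evolution}, but now carrying all four sources of variation at once: the initial set $\set I$, the feedback control authority encoded in $\bigphi(s)$, the free control authority $\controlDisturbSet(s)$, and the adversarial disturbance $\set V$ together with the drift $w$. The solution formula~\refeq{e:linear-soln} expresses $x(t)$ as an affine function in which the initial condition and each input enter linearly; consequently the reach set decomposes as a Minkowski sum of the images of these independent contributions under the appropriate powers of $\mat A$, and for zonotopes such a sum amounts to adding centers and concatenating generator matrices.

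First I would collect the state, feedback-control and free-control contributions. These are precisely what~\refeq{e:zonotope-control-disturb-evolution} already accounts for: they produce the partial center $\mat A^t \alpha + \sum_{s=0}^{t-1} \mat A^{t-1-s} \mat B \beta(s)$, the generator block $\mat F_{\set I}$ from~\refeq{e:control-disturb-generator-defn} in which $\bigphi(s)$ is folded into the same $\countvecI$ columns as $\mat A^t\genmat{\set I}\Gamma$ because both multiply the shared scaling $\lambda(x(t), \reachSet{t}{\set I})$, and the additional blocks $\mat F_s = \mat A^{t-1-s}\mat B\genmat{\controlDisturbSet(s)}\Psi(s)$ contributed by the free control authority at each past step $s$.

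Next I would add the disturbance and drift contributions by appealing directly to Proposition~\ref{t:reach-set-evolution-no-control}, whose derivation~\refeq{e:reach-set-zonotope-no-control} adds $\sum_{s=0}^{t-1}\mat A^{t-1-s}(\mat C\centvec{\set V} + w)$ to the center and appends the $t$ blocks $\mat A^{t-1}\mat C\genmat{\set V}, \ldots, \mat C\genmat{\set V}$ to the generator matrix. Summing the two centers, concatenating the two families of generator blocks, and adding the three column counts $\countvecI$, $\sum_{s=0}^{t-1}\countvec{\controlDisturbSet(s)}$ and $t\countvec{\set V}$ then reproduces~\refeq{e:zonotope-full-dynamics-evolution} line for line.

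The step I expect to require the most care is bookkeeping rather than any genuine obstacle: one must confirm that the feedback term $\bigphi(s)$ truly merges into the $\mat F_{\set I}$ block, scaling generators already present, whereas the free-control term $\controlDisturbSet(s)$ and the disturbance $\set V$ truly introduce fresh generators, and that the exponents of $\mat A$ and the summation ranges match between~\refeq{e:zonotope-control-disturb-evolution} and~\refeq{e:reach-set-zonotope-no-control}. Since every contribution is linear in its own free coordinate, superposition of zonotopes applies with no further hypotheses, and in particular no sign or convexity assumptions on $\gamma$, $\psi(s)$ or $\bigphi(s)$ are needed at this purely representational stage.
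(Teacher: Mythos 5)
Your proposal is correct and follows exactly the paper's own argument: the paper proves this proposition in one line by combining~\refeq{e:reach-set-zonotope-no-control} and~\refeq{e:zonotope-control-disturb-evolution} via superposition, which is precisely the decomposition you carry out. Your version simply spells out the bookkeeping (summed centers, concatenated generator blocks, added column counts) that the paper leaves implicit.
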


\begin{proof}
  Combine~\refeq{e:reach-set-zonotope-no-control}
  and~\refeq{e:zonotope-control-disturb-evolution} by superposition.
\end{proof}

\begin{proposition} \label{t:discriminating-set-containment}
  Given $\{ \beta(t), \bigphi(t), \psi(t) \}_{t = 0}^{T-1}$,
  if~\refeq{e:viable-input-control-disturb-box-containment}
  and~\refeq{e:discriminating-state-box-containment} hold then
  $\set I \subseteq \discSet{[0,T]}{\set X}$.
\end{proposition}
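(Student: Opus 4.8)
The plan is to obtain Proposition~\ref{t:discriminating-set-containment} by superposing the two mechanisms already developed: the state-feedback control of Proposition~\ref{t:viable-set-containment-control-disturb} and the worst-case disturbance bound of Proposition~\ref{t:invariant-set-disturb-containment}. Since Proposition~\ref{t:zonotope-full-dynamics-evolution} already assembles the full-dynamics reach set~\refeq{e:zonotope-full-dynamics-evolution} by exactly this superposition, the argument here should parallel the proof of Proposition~\ref{t:viable-set-containment-control-disturb} almost verbatim, with the disturbance generators carried along as in Proposition~\ref{t:invariant-set-disturb-containment}.

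First I would fix an arbitrary admissible disturbance signal $v\sig$ and define the control by the feedback rule~\refeq{e:viable-input-control-disturb}, where now $\tilde{\bigphi}(t)$ is padded with zero columns for \emph{all} of the extra generators of $\reachSet{t}{\set I}$ in~\refeq{e:zonotope-full-dynamics-evolution} --- those coming from $\controlDisturbSet(s)$ for $s < t$ \emph{and} those coming from $\set V$. The decisive structural observation, mirroring the two earlier proofs, is that under this control the coefficient vector $\lambda(x(t), \reachSet{t}{\set I})$ propagates consistently step by step: the $\set I$-block of $\lambda$ is preserved, each newly introduced free-control generator receives the free coefficient $\rho(s)$, and each newly introduced disturbance generator receives the realized disturbance coefficient. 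Hence $x(t) \in \reachSet{t}{\set I}$ for every realization of $v\sig$. Because $\tilde{\bigphi}(t)$ is zero on the disturbance columns, $u(t)$ depends only on $x(t)$ (through the $\set I$-block of $\lambda$) and on the free choice $\rho(t)$; it never anticipates future disturbance, which is exactly the information pattern demanded by the $\exists u\sig, \forall v\sig$ quantifier in~\refeq{e:discriminating-defn}.

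The two verification steps then reduce to the earlier ones. For admissibility, the disturbance columns enter $\tilde{\bigphi}(t)$ only as zeros and so do not enlarge the range of $u(t)$; thus, exactly as in Proposition~\ref{t:viable-set-containment-control-disturb}, the input constraint~\refeq{e:viable-input-control-disturb-box-containment} forces $u(t) \in \set U$. For state containment, I would substitute the full-dynamics reach set~\refeq{e:zonotope-full-dynamics-evolution} into the box-containment characterization of Lemma~\ref{t:zonotope-box-containment}; because the generators of $\set V$ are not scaled but appear through the unscaled $\ones{\countvec{\set V}}$ terms (worst-case disturbance), this reproduces precisely~\refeq{e:discriminating-state-box-containment} and yields $\reachSet{t}{\set I} \subseteq \set X$ uniformly over all disturbances. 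Combining with $x(t) \in \reachSet{t}{\set I}$ gives $x(t) \in \set X$ for every $v\sig$ and every $t$, so by~\refeq{e:discriminating-defn} we conclude $\set I \subseteq \discSet{[0,T]}{\set X}$.

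I expect the principal difficulty to be conceptual rather than algebraic: confirming that the best-case control and worst-case disturbance mechanisms compose without interference. Concretely, the delicate point is that the reach tube~\refeq{e:zonotope-full-dynamics-evolution} must simultaneously encode the control's feedback response (through the $\set I$- and free-control blocks) and a disturbance bound that is already worst-case (through the unscaled $\set V$ block), and one must check that the coefficient-propagation argument keeps the realized state inside this single tube for \emph{every} disturbance while the control stays feasible. The underlying arithmetic is just the superposition of the two prior derivations, as the proof of Proposition~\ref{t:zonotope-full-dynamics-evolution} indicates.
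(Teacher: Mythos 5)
Your proposal is correct and takes essentially the same route as the paper: the paper's proof likewise redefines $\tilde{\bigphi}(t)$ by padding it with zero columns for both the free-control generators and the $t\,\countvec{\set V}$ disturbance generators of $\reachSet{t}{\set I}$ in~\refeq{e:zonotope-full-dynamics-evolution}, and then combines the proofs of Propositions~\ref{t:invariant-set-disturb-containment} and~\ref{t:viable-set-containment-control-disturb} by superposition. Your write-up simply makes explicit the coefficient-propagation induction and the non-anticipativity of the feedback law, details the paper compresses into the phrase ``combine the proofs by superposition.''
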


\begin{proof}
  Redefine
  \[
      \tilde{\bigphi}(t) =
        \bma \bigphi(t)
          & \zeros{d_u \times
                   \sum_{s=0}^{t-1} \countvec{\set \controlDisturbSet(s)}}
          & \zeros{d_u \times t \countvec{\set V}}
        \ema,
  \]
  to account for the extra columns / generators in
  $\genmat{\reachSet{t}{\set I}}$ arising from the disturbance inputs,
  and then combine the proofs of
  Propositions~\ref{t:invariant-set-disturb-containment}
  and~\ref{t:viable-set-containment-control-disturb} by superposition.
\end{proof}

\subsection{Example: Nonlinear Quadrotor} \label{s:quadrotor}

\begin{figure*}
  \centering
  \begin{minipage}{\textwidth}
  \begin{equation} \label{e:dynamics-quad-linear}
    \bma \dot x_1 \\ \dot x_2 \\ \dot x_3 \\ 
         \dot x_4 \\ \dot x_5 \\ \dot x_6 \ema
      = \overbrace{
            \bma 0 & 0 & 1 & 0 & 0 & 0 \\
                 0 & 0 & 0 & 1 & 0 & 0 \\
                 0 & 0 & 0 & 0 & +K\bar{u}_1 \cos \bar{x}_5 & 0 \\
                 0 & 0 & 0 & 0 & -K\bar{u}_1 \sin \bar{x}_5 & 0 \\
                 0 & 0 & 0 & 0 & 0 & 1 \\
                 0 & 0 & 0 & 0 & -d_0 & -d_1 \ema}^{\mat A}
          \overbrace{\bma x_1 \\ x_2 \\ x_3 \\ x_4 \\ x_5 \\ x_6 \ema }^{x}  
          +
          \overbrace{
            \bma 0 & 0 \\
                 0 & 0 \\
                 K \sin \bar{x}_5 & 0 \\
                 K \cos \bar{x}_5 & 0 \\
                 0 & 0 \\
                 0 & n_0 \ema}^{\mat B}
          \overbrace{\bma u_1 \\ u_2 \ema}^{u}
        + \overbrace{
            \bma 0 & 0 \\ 0 & 0 \\ 1 & 0 \\ 0 & 1 \\ 0 & 0 \\ 0 & 0 \ema
            }^{\mat C}
          \overbrace{\bma v_1 \\ v_2 \ema }^{v}
        + \overbrace{
          \bma 0 \\ 0 \\ 
               -K \bar{u}_1 \sin \bar{x}_5 \\
               +K \bar{u}_1 \cos \bar{x}_5 - g \\ 
               0 \\ 0 \ema
          }^{w}
  \end{equation}
  \end{minipage}
  \caption{Linearized quadrotor dynamics.}
\end{figure*}

To demonstrate the utility of the full algorithm, we compute a
discriminating set for a partially nonlinear six dimensional
longitudinal model of a quadrotor taken from~\cite{bouffard12}.  The
state space dimensions are:
\begin{compactitem}
  \item horizontal position $x_1$ [m] (positive rightward),
  \item vertical position $x_2$ [m] (positive upward),
  \item horizontal velocity $x_3$ [m/s],
  \item vertical velocity $x_4$ [m/s],
  \item roll $x_5$ [rad] (positive clockwise),
  \item roll velocity $x_6$ [rad/s].\\
\end{compactitem}
The control input dimensions are:
\begin{compactitem}
  \item total thrust $u_1$,
  \item desired roll angle $u_2$.\\
\end{compactitem}
The nonlinear continuous time plant dynamics model is:
\begin{subequations}
  \label{e:dynamics-quad-nonlinear}
  \begin{align}
    \dot x_1 &= x_3, \\
    \dot x_2 &= x_4, \\
    \dot x_3 &= u_1 K \sin x_5, \\
      \label{e:dynamics-quad-nonlinear-x3}
    \dot x_4 &= -g + u_1 K \cos x_5, \\
      \label{e:dynamics-quad-nonlinear-x4}
    \dot x_5 &= x_6, \\
    \dot x_6 &= -d_0 x_5 - d_1 x_6 + n_0 u_2, 
      \label{e:dynamics-quad-nonlinear-x6}
  \end{align}
\end{subequations}
We adopt the constraint set used in~\cite{MYLT16}, except that we
broaden the range of $x_5$:
\begin{equation} \label{e:quad-state-constraints}
  \begin{aligned}
    x_1 &\in [ -1.7, +1.7 ], \\
    x_2 &\in [ +0.3, +2.0 ], \\
    x_3 &\in [ -0.8, +0.8 ], \\
    x_4 &\in [ -1.0, +1.0 ], \\
    x_5 &\in [ -\tfrac{\pi}{12}, +\tfrac{\pi}{12} ], \\
    x_6 &\in [ -\tfrac{\pi}{2}, +\tfrac{\pi}{2} ].
  \end{aligned}
\end{equation}
We also broaden the range of allowed controls:
\begin{equation} \label{e:quad-input-constraints}
  \begin{aligned}
    u_1 &\in [ -1.5, +1.5 ] + \bar{u}_1, \\
    u_2 &\in [ -\tfrac{\pi}{12}, +\tfrac{\pi}{12} ] + \bar{x}_5,
  \end{aligned}
\end{equation}
where $\bar u_1 = g / K$ and $\bar x_5 = 0$.  We then
linearize~\refeq{e:dynamics-quad-nonlinear} about $\bar u_1$ and $\bar
x_5$ to arrive at the linear model~\refeq{e:dynamics-quad-linear}.
Note that unlike~\cite{MYLT16} we do not hybridize the dynamics in
order to linearize about multiple operating points; the entire
constraint set is handled with a single linear model.
For the range of $x_5$ in~\refeq{e:quad-state-constraints} and $u_1$
in~\refeq{e:quad-input-constraints}, the linearization errors are
\[
  \begin{aligned}
    \text{error in } \dot x_3 &\in [ -0.2760, +0.2760 ], \\
    \text{error in } \dot x_4 &\in [ 0.0000, +0.3668 ].
  \end{aligned}
\]
For conservativeness, we choose $\set V$ to be this error rectangle
dilated by $10\%$.  Note that the disturbance affects only $x_3$ and
$x_4$ because the dynamics of the remaining dimensions are exactly
linear.  The parameter values used in simulation are taken
from~\cite{MYLT16}: $K=0.89/1.4$, $d_{0}=70$, $d_{1}=17$ and
$n_{0}=55$.

We use \matlab's \com{expm()} and \com{integral()} routines to
construct the discrete time version of~\refeq{e:dynamics-quad-linear}
with time step $0.05$, and then compute for a horizon $T = 40$.  To
choose generators, we note that for $\bar x_5 = 0$ the pairs of
dimensions $\{ (x_1, x_3), (x_2, x_4), (x_3, x_5), (x_5, x_6) \}$ look
like double integrators.  For these pairs we create five generators in
the north-west quadrant, while for the remaining pairs of states we
create only two generators along the diagonal and anti-diagonal.  To
these we add the six coordinate axes for a total of $\countvec{\set I}
= 48$ generators.  We also tried running the optimization with an
additional 52~randomly oriented generators to see whether coupling
between more than two dimensions would improve the discriminating set;
however, none of the randomly oriented generators achieved a scaling
$\gamma_i \geq 0.01$ and so we discarded them in subsequent runs.

\begin{figure}
  \includegraphics[width=0.45\textwidth]{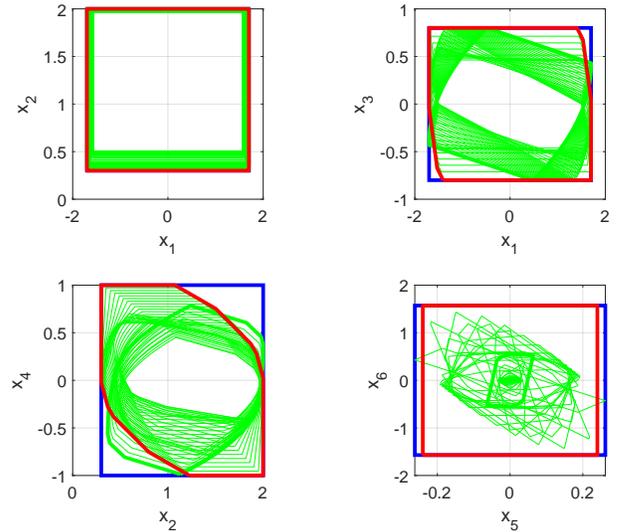}
  \caption{Projections of the computed discriminating set $\set I$
    (red thick line) for the quadrotor model.}
  \label{f:quad-discriminating}
\end{figure}

We run the optimization
problem~\refeq{e:viable-optimization-control-disturb} with
constraint~\refeq{e:discriminating-state-box-containment} substituted
for~\refeq{e:viable-state-control-disturb-box-containment}.  Run time
for the optimization is just over 5~minutes.  Only 11 of the 48
generators achieved a scaling factor $\gamma_i \geq 0.01$.
Figure~\ref{f:quad-discriminating} shows projections of the resulting
discriminating set.  Note that this set is discriminating for both the
linear and nonlinear models, since we conservatively capture all
linearization error in the disturbance input bounds.

Although we have insufficient space to make a detailed comparison with
the ellipsoid-based results from~\cite{MYLT16}, we can observe that in
roughly the same computational time (albeit on a slightly faster
laptop) our zonotope-based algorithm is able to find a much larger
discriminating set over twice the time horizon with double the time
resolution.  Furthermore, the zonotope representation is much less
conservative with its treatment of the disturbance input, and hence we
do not need to hybridize or to restrict the range of $x_5$ and $u_1$
so severely.  We hypothesize that most of the improvement in accuracy
arises from the fact that zonotopes can exactly represent the
rectangular form of typical state and input constraints, while
ellipsoids are forced to adopt dramatic under- or over-approximations.


\section{Discussion} \label{s:discussion}

The formulations developed above leave a number of parameters to be
chosen by the user and make some assumptions which could be relaxed.
We briefly discuss these issues in this section.

The key parameter open to the user is the choice of generators
$\genmat{\set I}$ and $\genmat{\set F(t)}$.  Unless there is some
reason to believe that direct coupling of the inputs is beneficial,
the latter will typically be chosen as an identity matrix.  Choosing
the former is trickier; however, the experience in
section~\ref{s:quadrotor} indicates that examination of the sparsity
pattern of $\mat A$ may allow one to choose these vectors more
efficiently than simply trying to cover the unit hypersphere in
$\R^{d_x}$.

One factor that is perhaps not so obvious when choosing generators is
that this choice impacts the quality of the resulting sets not just
directly through the generators but also indirectly through the
heuristic objective function
in~\refeq{e:viable-optimization-control-disturb}.  If $\genmat{\set
  I}$ is orthonormal then $\max \ones{\countvec{\set I}}^T \gamma$ is
not an unreasonable heuristic to make $\set I$ large, but as the
generators lose perpendicularity (inevitable as the number of
generators grows) the quality of this heuristic decreases.  It should
be noted that this heuristic appears to encourage sparse $\gamma$,
which could be a significant benefit for downstream uses of $\set I$.

Finally, the restriction to interval hulls of the control input set
$\set U$ and constraint set $\set X$ was driven by the need to include
constraints in the optimization which confirmed that (projections of)
zonotopes were contained within those sets.  That restriction can be
relaxed to any class of sets for which a reasonable number of
constraints can confirm zonotope containment; for example,
intersections of slabs, or even convex polygons with a modest number
of faces.  It may even be possible to allow full zonotopes
using~\cite[Lemma~3]{HREA16}, albeit at the cost of swapping a small
number of constraints (such
as~\refeq{e:viable-input-control-disturb-box-containment}) for a full
linear matrix inequality.

\section{Conclusions and Future Work} \label{s:conclusion}

We have derived convex optimizations whose solutions represent
invariant, viable or discriminating sets for discrete time, continuous
state affine dynamical systems, and demonstrated the results on a
simple rotation, a double integrator and a six dimensional nonlinear
longitudinal model of a quadrotor respectively.  The optimizations can
be solved with modest computing power in seconds to minutes.

A key shortcoming of the current formulation is its restriction to
discrete time.  Unfortunately, the typical approaches to soundly
mapping continuous time reachability into discrete time reachability
(for example, see~\cite{AF16}) cannot be used in our approach, so we
are exploring alternatives.

We are hopeful that some of the computational efficiency techniques
demonstrated in~\cite{bakduggirala17a,Bogomolov+18} could be applied
to improve the scalability of the optimization problems derived here.
Although there are pruning heuristics which could be applied to reduce
the number of generators, the size of the optimization inevitably
grows with finer time discretization and/or longer horizons.  We have
considered approaches to replace one large optimization with many
smaller ones, although there are tradeoffs in such schemes.

Finally, we are exploring the use of the viable and discriminating
sets for classifying and filtering exogenous control inputs at
run-time.

\begin{acks}
This work was supported by \grantsponsor{NSERC}{National Science and
  Engineering Research Council of
  Canada}{http://www.nserc-crsng.gc.ca/index_eng.asp} (NSERC)
Undergraduate Student Research Awards and Discovery Grant
\#\grantnum{NSERC}{298211}.
\end{acks}

\bibliographystyle{ACM-Reference-Format}
\bibliography{hscc-zonotope-viability} 


\begin{thebibliography}{00}


\ifx \showCODEN    \undefined \def \showCODEN     #1{\unskip}     \fi
\ifx \showDOI      \undefined \def \showDOI       #1{#1}\fi
\ifx \showISBNx    \undefined \def \showISBNx     #1{\unskip}     \fi
\ifx \showISBNxiii \undefined \def \showISBNxiii  #1{\unskip}     \fi
\ifx \showISSN     \undefined \def \showISSN      #1{\unskip}     \fi
\ifx \showLCCN     \undefined \def \showLCCN      #1{\unskip}     \fi
\ifx \shownote     \undefined \def \shownote      #1{#1}          \fi
\ifx \showarticletitle \undefined \def \showarticletitle #1{#1}   \fi
\ifx \showURL      \undefined \def \showURL       {\relax}        \fi
\providecommand\bibfield[2]{#2}
\providecommand\bibinfo[2]{#2}
\providecommand\natexlab[1]{#1}
\providecommand\showeprint[2][]{arXiv:#2}

\bibitem[\protect\citeauthoryear{Althoff}{Althoff}{2010}]%
        {althoff10a}
\bibfield{author}{\bibinfo{person}{Matthias Althoff}.}
  \bibinfo{year}{2010}\natexlab{}.
\newblock {\em \bibinfo{title}{Reachability Analysis and its Application to the
  Safety Assessment of Autonomous Cars}}.
\newblock \bibinfo{thesistype}{Ph.D. Dissertation}.
\newblock


\bibitem[\protect\citeauthoryear{Althoff}{Althoff}{2015}]%
        {althoff15}
\bibfield{author}{\bibinfo{person}{Matthias Althoff}.}
  \bibinfo{year}{2015}\natexlab{}.
\newblock \showarticletitle{An introduction to CORA 2015}. In
  \bibinfo{booktitle}{{\em In Proc. of the Workshop on Applied Verification for
  Continuous and Hybrid Systems}}. \bibinfo{pages}{120--151}.
\newblock


\bibitem[\protect\citeauthoryear{Althoff and Frehse}{Althoff and
  Frehse}{2016}]%
        {AF16}
\bibfield{author}{\bibinfo{person}{Matthias Althoff} {and}
  \bibinfo{person}{Goran Frehse}.} \bibinfo{year}{2016}\natexlab{}.
\newblock \showarticletitle{Combining zonotopes and support functions for
  efficient reachability analysis of linear systems}. In
  \bibinfo{booktitle}{{\em IEEE Conference on Decision and Control (CDC)}}.
  \bibinfo{pages}{7439--7446}.
\newblock


\bibitem[\protect\citeauthoryear{Althoff and Krogh}{Althoff and Krogh}{2011}]%
        {AK11}
\bibfield{author}{\bibinfo{person}{Matthias Althoff} {and}
  \bibinfo{person}{Bruce~H. Krogh}.} \bibinfo{year}{2011}\natexlab{}.
\newblock \showarticletitle{Zonotope Bundles for the Efficient Computation of
  Reachable Sets}.
\newblock In \bibinfo{booktitle}{{\em IEEE Conference on Decision and Control
  (CDC)}}. \bibinfo{pages}{6814--6821}.
\newblock


\bibitem[\protect\citeauthoryear{Aubin, Bayen, and Saint-Pierre}{Aubin
  et~al\mbox{.}}{2011}]%
        {ABSP11}
\bibfield{author}{\bibinfo{person}{Jean-Pierre Aubin},
  \bibinfo{person}{Alexandre~M. Bayen}, {and} \bibinfo{person}{Patrick
  Saint-Pierre}.} \bibinfo{year}{2011}\natexlab{}.
\newblock \bibinfo{booktitle}{{\em Viability Theory: New Directions}}.
\newblock \bibinfo{publisher}{Springer}.
\newblock
\showDOI{%
\url{https://doi.org/10.1007/978-3-642-16684-6}}


\bibitem[\protect\citeauthoryear{Bak and Duggirala}{Bak and Duggirala}{2017}]%
        {bakduggirala17a}
\bibfield{author}{\bibinfo{person}{Stanley Bak} {and} \bibinfo{person}{Parasara
  Duggirala}.} \bibinfo{year}{2017}\natexlab{}.
\newblock \showarticletitle{HyLAA: A Tool for Computing Simulation-Equivalent
  Reachability for Linear Systems}. In \bibinfo{booktitle}{{\em Hybrid Systems:
  Computation and Control (HSCC)}}.
\newblock


\bibitem[\protect\citeauthoryear{Bogomolov, Forets, Frehse, Viry, Podelski, and
  Schilling}{Bogomolov et~al\mbox{.}}{2018}]%
        {Bogomolov+18}
\bibfield{author}{\bibinfo{person}{Sergiy Bogomolov}, \bibinfo{person}{Marcelo
  Forets}, \bibinfo{person}{Goran Frehse}, \bibinfo{person}{Fr{\'e}d{\'e}ric
  Viry}, \bibinfo{person}{Andreas Podelski}, {and} \bibinfo{person}{Christian
  Schilling}.} \bibinfo{year}{2018}\natexlab{}.
\newblock \showarticletitle{Reach Set Approximation Through Decomposition with
  Low-dimensional Sets and High-dimensional Matrices}. In
  \bibinfo{booktitle}{{\em Hybrid Systems: Computation and Control (HSCC)}}.
  \bibinfo{pages}{41--50}.
\newblock
\showDOI{%
\url{https://doi.org/10.1145/3178126.3178128}}


\bibitem[\protect\citeauthoryear{Bouffard}{Bouffard}{2012}]%
        {bouffard12}
\bibfield{author}{\bibinfo{person}{Patrick Bouffard}.}
  \bibinfo{year}{2012}\natexlab{}.
\newblock \bibinfo{booktitle}{{\em On-board Model Predictive Control of a
  Quadrotor Helicopter: Design, Implementation, and Experiments}}.
\newblock \bibinfo{type}{{T}echnical {R}eport} UCB/EECS-2012-241.
  \bibinfo{institution}{Department of Electrical Engineering and Computer
  Science, University of California at Berkeley}.
\newblock
\showURL{%
\url{http://www.eecs.berkeley.edu/Pubs/TechRpts/2012/EECS-2012-241.html}}


\bibitem[\protect\citeauthoryear{Girard}{Girard}{2005}]%
        {girard05}
\bibfield{author}{\bibinfo{person}{Antoine Girard}.}
  \bibinfo{year}{2005}\natexlab{}.
\newblock \showarticletitle{Reachability of Uncertain Linear Systems using
  Zonotopes}.
\newblock In \bibinfo{booktitle}{{\em Hybrid Systems: Computation and Control
  (HSCC)}}, \bibfield{editor}{\bibinfo{person}{Manfred Morari} {and}
  \bibinfo{person}{Lothar Thiele}} (Eds.). Number 3414 in
  \bibinfo{series}{Lecture Notes in Computer Science}.
  \bibinfo{publisher}{Springer Verlag}, \bibinfo{pages}{291--305}.
\newblock
\showDOI{%
\url{https://doi.org/10.1007/978-3-540-31954-2_19}}


\bibitem[\protect\citeauthoryear{Girard, Le~Guernic, and Maler}{Girard
  et~al\mbox{.}}{2006}]%
        {GGM06}
\bibfield{author}{\bibinfo{person}{Antoine Girard}, \bibinfo{person}{Colas
  Le~Guernic}, {and} \bibinfo{person}{Oded Maler}.}
  \bibinfo{year}{2006}\natexlab{}.
\newblock \showarticletitle{Efficient Computation of Reachable Sets of Linear
  Time-Invariant Systems with Inputs}.
\newblock In \bibinfo{booktitle}{{\em Hybrid Systems: Computation and Control
  (HSCC)}}, \bibfield{editor}{\bibinfo{person}{{Jo\~{a}o} Hespanha} {and}
  \bibinfo{person}{Ashish Tiwari}} (Eds.). Number 3927 in
  \bibinfo{series}{Lecture Notes in Computer Science}.
  \bibinfo{publisher}{Springer Verlag}, \bibinfo{pages}{257--271}.
\newblock


\bibitem[\protect\citeauthoryear{Gover and Krikorian}{Gover and
  Krikorian}{2010}]%
        {GK10}
\bibfield{author}{\bibinfo{person}{Eugene Gover} {and} \bibinfo{person}{Nishan
  Krikorian}.} \bibinfo{year}{2010}\natexlab{}.
\newblock \showarticletitle{Determinants and the volumes of parallelotopes and
  zonotopes}.
\newblock \bibinfo{journal}{{\it Linear Algebra Appl.}} \bibinfo{volume}{433},
  \bibinfo{number}{1} (\bibinfo{year}{2010}), \bibinfo{pages}{28 -- 40}.
\newblock
\showDOI{%
\url{https://doi.org/10.1016/j.laa.2010.01.031}}


\bibitem[\protect\citeauthoryear{Grant and Boyd}{Grant and Boyd}{2008}]%
        {gb08}
\bibfield{author}{\bibinfo{person}{Michael Grant} {and}
  \bibinfo{person}{Stephen Boyd}.} \bibinfo{year}{2008}\natexlab{}.
\newblock \showarticletitle{Graph implementations for nonsmooth convex
  programs}.
\newblock In \bibinfo{booktitle}{{\em Recent Advances in Learning and
  Control}}, \bibfield{editor}{\bibinfo{person}{V.~Blondel},
  \bibinfo{person}{S.~Boyd}, {and} \bibinfo{person}{H.~Kimura}} (Eds.).
  \bibinfo{publisher}{Springer-Verlag Limited}, \bibinfo{pages}{95--110}.
\newblock
\newblock
\shownote{\url{http://stanford.edu/~boyd/graph_dcp.html}.}


\bibitem[\protect\citeauthoryear{Grant and Boyd}{Grant and Boyd}{2017}]%
        {cvx}
\bibfield{author}{\bibinfo{person}{Michael Grant} {and}
  \bibinfo{person}{Stephen Boyd}.} \bibinfo{year}{2017}\natexlab{}.
\newblock \bibinfo{title}{{CVX}: Matlab Software for Disciplined Convex
  Programming, version 2.1}.
\newblock \bibinfo{howpublished}{\url{http://cvxr.com/cvx}}.
  (\bibinfo{date}{Dec.} \bibinfo{year}{2017}).
\newblock


\bibitem[\protect\citeauthoryear{Han, Rizaldi, El-Guindy, and Althoff}{Han
  et~al\mbox{.}}{2016}]%
        {HREA16}
\bibfield{author}{\bibinfo{person}{Dongkun Han}, \bibinfo{person}{Albert
  Rizaldi}, \bibinfo{person}{Ahmed El-Guindy}, {and} \bibinfo{person}{Matthias
  Althoff}.} \bibinfo{year}{2016}\natexlab{}.
\newblock \showarticletitle{On enlarging backward reachable sets via Zonotopic
  set membership}. In \bibinfo{booktitle}{{\em IEEE Int. Symp. on Intelligent
  Control (ISIC)}}. \bibinfo{pages}{1--8}.
\newblock


\bibitem[\protect\citeauthoryear{Maidens, Kaynama, Mitchell, Oishi, and
  Dumont}{Maidens et~al\mbox{.}}{2013}]%
        {MKMOD13}
\bibfield{author}{\bibinfo{person}{John~N. Maidens}, \bibinfo{person}{Shahab
  Kaynama}, \bibinfo{person}{Ian~M. Mitchell}, \bibinfo{person}{Meeko M.~K.
  Oishi}, {and} \bibinfo{person}{Guy~A. Dumont}.}
  \bibinfo{year}{2013}\natexlab{}.
\newblock \showarticletitle{Lagrangian Methods for Approximating the Viability
  Kernel in High-Dimensional Systems}.
\newblock \bibinfo{journal}{{\em Automatica\/}} \bibinfo{volume}{49},
  \bibinfo{number}{7} (\bibinfo{date}{July} \bibinfo{year}{2013}),
  \bibinfo{pages}{2017--2029}.
\newblock
\showDOI{%
\url{https://doi.org/10.1016/j.automatica.2013.03.020}}


\bibitem[\protect\citeauthoryear{Mitchell and Kaynama}{Mitchell and
  Kaynama}{2015}]%
        {MK15}
\bibfield{author}{\bibinfo{person}{Ian~M. Mitchell} {and}
  \bibinfo{person}{Shahab Kaynama}.} \bibinfo{year}{2015}\natexlab{}.
\newblock \showarticletitle{An Improved Algorithm for Robust Safety Analysis of
  Sampled Data Systems}. In \bibinfo{booktitle}{{\em Hybrid Systems:
  Computation and Control (HSCC)}}. \bibinfo{pages}{21--30}.
\newblock
\showDOI{%
\url{https://doi.org/10.1145/2728606.2728619}}


\bibitem[\protect\citeauthoryear{Mitchell, Yeh, Laine, and Tomlin}{Mitchell
  et~al\mbox{.}}{2016}]%
        {MYLT16}
\bibfield{author}{\bibinfo{person}{Ian~M. Mitchell}, \bibinfo{person}{Jeffrey
  Yeh}, \bibinfo{person}{Forrest~J. Laine}, {and} \bibinfo{person}{Claire~J.
  Tomlin}.} \bibinfo{year}{2016}\natexlab{}.
\newblock \showarticletitle{Ensuring Safety for Sampled Data Systems: An
  Efficient Algorithm for Filtering Potentially Unsafe Input Signals}. In
  \bibinfo{booktitle}{{\em IEEE Conference on Decision and Control (CDC)}}.
  \bibinfo{address}{Las Vegas, NV}, \bibinfo{pages}{7431--7438}.
\newblock
\showDOI{%
\url{https://doi.org/10.1109/CDC.2016.7799417}}


\bibitem[\protect\citeauthoryear{Sch{\"u}rmann and Althoff}{Sch{\"u}rmann and
  Althoff}{2017a}]%
        {SA17a}
\bibfield{author}{\bibinfo{person}{Bastian Sch{\"u}rmann} {and}
  \bibinfo{person}{Matthias Althoff}.} \bibinfo{year}{2017}\natexlab{a}.
\newblock \showarticletitle{Convex interpolation control with formal guarantees
  for disturbed and constrained nonlinear systems}. In \bibinfo{booktitle}{{\em
  Hybrid Systems: Computation and Control (HSCC)}}. \bibinfo{pages}{121--130}.
\newblock
\showDOI{%
\url{https://doi.org/10.1145/3049797.3049800}}


\bibitem[\protect\citeauthoryear{Sch{\"u}rmann and Althoff}{Sch{\"u}rmann and
  Althoff}{2017b}]%
        {SA17c}
\bibfield{author}{\bibinfo{person}{Bastian Sch{\"u}rmann} {and}
  \bibinfo{person}{Matthias Althoff}.} \bibinfo{year}{2017}\natexlab{b}.
\newblock \showarticletitle{Guaranteeing constraints of disturbed nonlinear
  systems using set-based optimal control in generator space}. In
  \bibinfo{booktitle}{{\em International Federation of Automatic Control (IFAC)
  World Congress}}.
\newblock


\bibitem[\protect\citeauthoryear{Sch{\"u}rmann and Althoff}{Sch{\"u}rmann and
  Althoff}{2017c}]%
        {SA17b}
\bibfield{author}{\bibinfo{person}{Bastian Sch{\"u}rmann} {and}
  \bibinfo{person}{Matthias Althoff}.} \bibinfo{year}{2017}\natexlab{c}.
\newblock \showarticletitle{Optimal control of sets of solutions to formally
  guarantee constraints of disturbed linear systems}. In
  \bibinfo{booktitle}{{\em American Control Conference (ACC)}}.
  \bibinfo{pages}{2522--2529}.
\newblock
\showDOI{%
\url{https://doi.org/10.23919/ACC.2017.7963332}}


\end{thebibliography}

\end{document}